\newcommand{\weiyu}[1]{\ifthenelse{\boolean{showcomments}}
{ \textcolor{blue}{(Weiyu says:  #1)}}{}}
\newcommand{\ehsan}[1]{  \ifthenelse{\boolean{showcomments}}
{ \textcolor{red}{(Ehsan says:  #1)}} {}  }
\newcommand{\chris}[1]{\ifthenelse{\boolean{showcomments}}
{ \textcolor{magenta}{(Chris says: #1)} } {} }
\newcommand{\babak}[1]{\ifthenelse{\boolean{showcomments}}
{ \textcolor{green}{(Babak says:  #1)}}{}}
\title{
Symbol Error Rate Performance of\\ Box-relaxation Decoders\\ in Massive MIMO
}
\author{Christos Thrampoulidis$^\star$, Weiyu Xu$^\dagger$, Babak Hassibi$^\ddagger$
\thanks{
$^\star$Research Laboratory of Electronics, MIT, Cambridge, USA, 
$^\dagger$Department of ECE, University of Iowa, Iowa city, USA, 
$^\ddagger$Department of Electrical Engeeniring, Caltech, Pasadena, USA.
}
}
\definecolor{darkred}{RGB}{250,0,0}
\definecolor{darkgreen}{RGB}{0,150,0}
\definecolor{myblue}{RGB}{0,0,250}
\definecolor{darkblue}{RGB}{0,0,200}
\newcommand{\psiu}{\overline{\psi}}
\newcommand{\psil}{\underline{\psi}}
\newcommand{\Rb}{\mathbf{R}}
\newcommand{\Qb}{\mathbf{Q}}
\newcommand{\plim}{\operatorname{plim}}
\newcommand{\nlim}{\operatorname{{ n \rightarrow \infty }}}
\newcommand{\plimn}{\underset{\nlim}{\plim}}
\newcommand{\FM}{F_M}
\newcommand{\ksi}{\xi}
\newcommand{\taut}{\tilde{\tau}}
\newcommand{\elli}{\ell_{j}}
\newcommand{\ui}{u_{j}}
\newcommand{\ellix}{\x^-_{0,i}}
\newcommand{\uix}{\x^+_{0,i}}
\newcommand{\wh}{{\hat\w}}
\newcommand{\psio}{\overline{\psi}}
\newcommand{\Pe}{P_e}
\newcommand{\BER}{\mathrm{SER}}
\newcommand{\SER}{\mathrm{SER}}
\newcommand{\ind}[1]{{\mathds{1}}_{\{#1\}}}
\newcommand{\gm}{\|\g\|}
\newcommand{\dB}{dB }
\newcommand{\PeMF}{P_e^{MFB}}
\newcommand{\Pro}{\mathbb{P}}
\newcommand{\psiubw}{\psi(\w,\ub)}
\newcommand{\Scc}{{\Sc^c}}
\newcommand{\phio}{\overline{\phi}}
\theoremstyle{theorem}
\newtheorem{thm}{Theorem}[section]
\newtheorem{lem}{Lemma}[section]
\newtheorem{cor}{Corollary}[section]
\theoremstyle{remark}
\theoremstyle{definition}
\newcommand{\uniform}{\citep[Cor..~II.1]{AG1982}}
\newcommand{\eps}{\epsilon}
\newcommand{\wt}{\tilde\w}
\newcommand{\sign}{\mathrm{sign}}
\newcommand{\SNR}{\mathrm{SNR}}
\newcommand{\E}{\mathbb{E}}                    
\newcommand{\sigg}{\sigma^2}
\newcommand{\nn}{\notag}
\newcommand{\R}{\mathbb{R}}
\newcommand{\G}{\mathbf{G}}
\newcommand{\A}{\mathbf{A}}
\newcommand{\x}{\mathbf{x}}
\newcommand{\w}{\mathbf{w}}
\newcommand{\ub}{\mathbf{u}}
\newcommand{\g}{\mathbf{g}}
\newcommand{\vb}{\mathbf{v}}
\newcommand{\y}{\mathbf{y}}
\newcommand{\z}{\mathbf{z}}
\newcommand{\ab}{\mathbf{a}}
\newcommand{\h}{\mathbf{h}}
\newcommand{\Sc}{{\mathcal{S}}}
\newcommand{\Nn}{\mathcal{N}}
\newcommand{\Cc}{\mathcal{C}}
\newcommand{\Ec}{\mathcal{E}}
\newcommand{\beq}{\begin{equation}}
\newcommand{\eeq}{\end{equation}}
\newcommand{\bea}{\begin{align}}
\newcommand{\eea}{\end{align}}
\newcommand{\vp}{\vspace{4pt}}
\newcommand{\rP}{\xrightarrow{P}}
\begin{document}

\maketitle
\begin{abstract}
The maximum-likelihood (ML) decoder for symbol detection in large multiple-input multiple-output  wireless communication systems is typically computationally prohibitive. In this paper, we study a popular and practical alternative, namely the Box-relaxation optimization (BRO) decoder, which is a natural convex relaxation of the ML. For iid real Gaussian channels with additive Gaussian noise,  we obtain exact asymptotic expressions for the symbol error rate (SER) of the BRO. The formulas are particularly simple, they yield useful insights, and they allow accurate comparisons to the matched-filter bound (MFB) and to the zero-forcing decoder. 
For BPSK signals the SER performance of the BRO is within $3$\dB of the MFB for square systems, and it approaches the MFB as the number of receive antennas grows large compared to the number of transmit antennas.  Our analysis further characterizes the empirical density function of the solution of the BRO, and shows that error events for any fixed number of symbols are asymptotically independent. The fundamental tool behind the analysis is the convex Gaussian min-max theorem.
\end{abstract}



\section{Introduction}
%
%
%
%
The problem of recovering an unknown vector of symbols that belong to a finite constellation from a set of noise corrupted linearly related measurements arises in numerous applications, and in particular in  multiple-input multiple output (MIMO) wireless communication systems \cite{ngo2013energy,rusek2013scaling,wen2014message,narasimhan2014channel}. As a result, a large host of exact and heuristic optimization algorithms have been proposed over the years. Exact algorithms, such as sphere decoding and its variants, become computationally prohibitive as the problem dimension grows, a scenario that is typical in modern massive MIMO systems, e.g., \cite{rusek2013scaling}. Heuristic algorithms such as zero-forcing, MMSE, decision-feedback, etc., \cite{grotschel2012geometric,foschini1996layered,hassibi2005sphere,guo2003multiuser} have inferior performances that are often difficult to precisely characterize.
One popular heuristic is the so called box-relaxation optimization decoder, which is a natural convex relaxation of the maximum-likelihood (ML) decoder, and which allows one to recover the signal via efficient convex optimization followed by hard thresholding, e.g., \cite{tan2001constrained,yener2002cdma,ma2002quasi}. Despite its popularity, very little is known analytically about the decoding performance of this method. In this paper, we close this gap by deriving exact asymptotic error-rate characterizations under the assumption of real Gaussian wireless channel and additive Gaussian noise. 
%


\subsection{Problem formulation}

We consider the problem of recovering an unknown vector $\x_0$ of $n$ transmitted symbols each belonging to a finite constellation from 
the noisy multiple-input multiple-output relation,
$
\y=\A\x_0+\z\in\R^m,$
where $\A\in\R^{m\times n}$ is the MIMO channel matrix (assumed to be known) and $\z\in\R^m$ is the noise vector.
We assume iid real Gaussian channel with additive Gaussian noise. In particular, $\A$ has entries iid $\Nn(0,1/n)$ and $\z$ has entries iid $\Nn(0,\sigma^2)$. The normalization is such that the signal-to-noise ratio (SNR) varies inversely proportional to the noise variance $\sigma^2$. We are interested in the large-system limit, where both the number $n$ of transmit antennas  and the number $m$ of receive antennas are large.
For simplicity of exposition we assume, for the most part of the paper, that $\x_0$ is an n-dimensional BPSK vector, i.e., $\x_0\in\{\pm1\}^n$. Extensions to M-ary constellations are also provided.


\noindent\textbf{Maximum-Likelihood decoder}. The ML decoder for BPSK signal recovery, which maximizes the block error probability (assuming the $\x_{0,i}$ are equally likely) is given by
$
\min_{\x\in\{\pm1\}^n}\|\y-\A\x\|_2.
$
 Solving for the exact ML solution is often computationally intractable, especially when $n$ is large, and therefore a variety of heuristics have been proposed (zero-forcing, mmse, decision-feedback, etc.) \citep{verdu1998multiuser,guo2003multiuser}.

\noindent\textbf{Box-relaxation optimization decoder}.
The heuristic we consider in this paper is the box-relaxation optimization (BRO) decoder \citep{tan2001constrained,yener2002cdma,ma2002quasi}. It consists of two steps. The first one involves solving a convex relaxation of the ML algorithm, where $\x\in\{\pm1\}^n$ is relaxed to $\x\in[-1,1]^n$. The output of the optimization is  hard-thresholded in the second step to produce the final binary estimate. Formally, the algorithm outputs an estimate
$\x^*$
 of $\x_0$ given as
\begin{subequations}\label{eq:algo}
\begin{align}
\hat\x = \arg\min_{-1\leq\x_i\leq 1}\|\y-\A\x\|_2,\label{eq:LASSO}\\
\x^* = \sign(\hat\x),
\end{align}
\end{subequations}
where the $\sign(\cdot)$ function returns the sign of its input and acts element-wise on input vectors.
The BRO decoder naturally extends to the case of recovering signals from higher-order constellations; see Section \ref{sec:MPAM}.

\noindent\textbf{Symbol error rate}.
 We evaluate the performance of the decoder by the symbol error rate (SER),  defined as
 \begin{align}
\SER &:=  \frac{1}{n}\sum_{i=1}^n \ind{\x^*_i\neq \x_{0,i}}, \label{eq:BER}
 \end{align}
 with $\ind{}$ used to denote the indicator function. 
 A closely related quantity that is also of interest is the symbol-error probability $\Pe$, which is  defined as the expectation of the SER averaged over the noise, over the channel, and over the constellation. Formally,
\begin{align}
 \Pe := \E\left[\BER \right] &= \frac{1}{n}\sum_{i=1}^n \Pr\left(\x^*_i\neq \x_{0,i}\right)\label{eq:Pe}.
 \end{align}

\subsection{Contribution and related work}
In this paper, we derive the first rigorous precise characterization of the SER for the BRO decoder in the large-system limit, where the numbers $m$ and $n$ of receive and transmit antennas grow proportionally large at a fixed rate $\delta=m/n$. We complement the precise error formulas with closed-form, tight, upper and lower bounds that are simple functions of the SNR and of $\delta$. These bounds allow useful insights on the decoding performance of the BRO, and they allow a quantitative comparison to the matched-filter bound (MFB) and to the zero-forcing (ZF) decoder. As a concrete example, for BPSK signals the SER of the BRO at high-SNR is $Q(\sqrt{(\delta-1/2)\SNR}),$ where the $Q$-function is the tail probability of the standard normal distribution. This value is within $3$\dB of the MFB for square systems, and it approaches the MFB as $m$ approaches $n$. Finally, we evaluate the large-system empirical distribution of the output of the BRO, and we show that error events for any fixed number of symbol-errors are asymptotically independent. 

To the best of our knowledge, a precise formula for the SER was unknown for the BRO. We remark that the replica method developed in statistical mechanics can be used to give formulas for the SER of various detectors in multiuser detection for code-division multiple access (CDMA) or massive MIMO systems. However, the replica method involves a set of conjectured assumptions that remain mostly unverified by rigorous means; please see \cite{CDMAReplicaTanaka,guo2003multiuser,guo2005randomly} and references therein. In contrast, our analysis is rigorous, and the techniques used are fundamentally different. They are based on recent advances in comparison inequalities for Gaussian processes; in particular, the convex Gaussian min-max theorem \cite{COLT15,Master}. 

The present paper is a significantly extended version of our conference paper in \cite{ICASSP} \footnote{The analysis framework that we present here is general and can be used to analyze the performance of other decoders as well. For example, see our recent papers with co-authors \cite{atitallah2017ber,atitallah2017box}, which build upon the framework of this work.}. In a related recent line of work \cite{jeon2015optimality,ghods2015optimal,jeon2016performance}, the authors have proposed and have investigated the performance of a new class of iterative decoding methods for signal detection in large MIMO systems, which rely on approximate message passing (AMP) \cite{donoho2009message}. The decoding methods that these papers discuss are different than the BRO decoder, and the analysis tools used are also different than the ones presented here. Interestingly, after our paper \cite{ICASSP} appeared, the authors of \cite{jeon2016performance} used our results to show that their proposed algorithm achieves the same error-rate performance as the BRO decoder.

%
%

\vspace{-5pt}
\subsection{Paper Organization}
In Section \ref{sec:result} we analyze the performance of the BRO for BPSK signals. The main theorem of this section, namely Theorem \ref{thm:main}, characterizes the SER and leads to an accurate comparison of the BRO to the MFB and to the ZF decoder. We extend the results to M-PAM constellations in Section \ref{sec:MPAM}. Section \ref{sec:proof} includes the main technical result of the paper, namely Theorem \ref{thm:general}, as well as its detailed proof. The paper concludes in Section \ref{sec:conc} with a discussion on future research directions. Finally, some technical proofs are deferred to the Appendix.

\section{The BRO Decoder for BPSK signals}\label{sec:result}

We precisely analyze the error-rate performance of the BRO decoder for BPSK signals. Our main Theorem \ref{thm:main} in Section \ref{sec:BPSK_main} evaluates its symbol error rate, and simple, closed-form (upper and lower) bounds are computed in Section \ref{sec:BPSK_bounds}. In Sections \ref{sec:BPSK_MFB} and \ref{sec:BPSK_ZF} we use these bounds to compare the BRO decoder to the matched-filter bound, and to the zero-forcing decoder, respectively.

\subsection{Precise $\SER$ performance}\label{sec:BPSK_main}

Our main result explicitly characterizes the limiting behavior of the $\SER$ of the BRO in \eqref{eq:algo}, under a large-system limit in which $m,n\rightarrow+\infty$ at a proportional (constant) rate $\delta>0$. The $\SNR$ is assumed constant; in particular, it does not scale with $n$. Note that for $\x_0\in\{\pm1\}^n$, $\SNR={1}/{\sigma^2}$.

We use standard notation  $\plim_{\nlim}{X_n}=X$  to denote that a sequence of random variables $X_n$ converges \emph{in probability} towards a constant $X$. All limits will be taken in the regime  $m,n\rightarrow+\infty, {m}/{n}=\delta$; to keep notation short we simply write $n\rightarrow\infty$. Finally, we use $Q(\cdot)$ denote the  $Q$-function associated with the standard normal density $p(h)=\frac{1}{\sqrt{2\pi}}\mathrm{e}^{-h^2/2}$.

\begin{thm}[SER for BPSK signals]\label{thm:main}
Let $\SER$ denote the symbol-error-rate of the box-relaxation optimization decoder in \eqref{eq:algo}, for some fixed but unknown BPSK signal $\x_0\in\{\pm1\}^n$. Fix a constant $\SNR$ and a constant $\delta\in(\frac{1}{2},+\infty)$. Then, in the limit of  $m,n\rightarrow+\infty,~{m}/{n}=\delta$, it holds:
$$
\plimn~\SER = Q\left(\frac{1}{\tau_*}\right),
$$
where $\tau_*$ is the unique positive minimizer of the strictly convex function $F:(0,+\infty)\rightarrow\R$ defined as:
\begin{align}\label{eq:F}
F(\tau):={\tau}\big(\delta-\frac{1}{2}\big)+\frac{1/\SNR}{\tau}+\Big(\tau+\frac{4}{\tau}\Big)Q\left(\frac{2}{\tau}\right)-\sqrt{\frac{2}{\pi}}e^{-\frac{2}{\tau^2}}.
\end{align}
\end{thm}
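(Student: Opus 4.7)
The natural approach, and the one signaled by the paper's reliance on the Convex Gaussian Min-max Theorem (CGMT), is to convert the random primary optimization (PO) into a deterministic scalar minimization over an auxiliary variable $\tau$.

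First I pass to the error variable $\w=\hat\x-\x_0$, which lives in the shifted box $\Sc=\{\w:-1-x_{0,i}\leq w_i\leq 1-x_{0,i}\}$, so the BRO reads $\min_{\w\in\Sc}\|\z-\A\w\|$. Dualizing the norm as $\|v\|=\max_{\|\ub\|\leq 1}\ub^T v$ and pulling out the $1/\sqrt{n}$ scaling of $\A$ puts the problem into the bilinear form required by the CGMT, which replaces $\A$ by two independent standard Gaussian vectors $\g\in\R^m,\h\in\R^n$ and yields the auxiliary optimization (AO)
\begin{equation*}
\min_{\w\in\Sc}\max_{\|\ub\|\leq 1}\; \ub^T\z + \tfrac{\|\w\|}{\sqrt{n}}\g^T\ub + \tfrac{\|\ub\|}{\sqrt{n}}\h^T\w.
\end{equation*}
All convexity and compactness hypotheses of the CGMT hold after truncating $\|\w\|$ at a harmless a~priori bound coming from $\Sc\subset[-2,2]^n$, so it suffices to study this AO.

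Next I would collapse the AO to a scalar deterministic problem. Maximizing over the direction of $\ub$ reduces the first two terms to $\beta\|\al\g+\z\|$ with $\al=\|\w\|/\sqrt{n}$ and $\beta=\|\ub\|\in[0,1]$; since $\|\al\g+\z\|/\sqrt{n}\to\sqrt{\delta(\al^2+\sigma^2)}$ by the law of large numbers, the only relevant statistics of $\w$ after normalization are $\al$ and $\h^T\w/n$. Invoking the identity $\sqrt{a}=\min_{\tau>0}\{a/(2\tau)+\tau/2\}$ introduces the scalar parameter $\tau$; exchanging min and max (justified by the convex--concave structure) turns the AO objective into $\tfrac{\tau}{2}+\tfrac{\delta\sigma^2}{2\tau}+\tfrac{1}{n}\sum_i p_\tau(h_i,x_{0,i})$ where $p_\tau(h,x)=\min_{w\in[-1-x,1-x]}\{\delta w^2/(2\tau)+hw\}$. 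Gaussian sign symmetry reduces to $\x_0=\one$, so each per-coordinate minimum becomes the projection of $-h_i\tau/\delta$ onto $[-2,0]$. Passing the sum to its Gaussian expectation (LLN), computing the resulting integrals in closed form, and rescaling $\tau\mapsto\delta\tau$ produces the AO limit $\tfrac{1}{2}F(\tau)$ with $F$ as in the theorem; strict convexity of $F$ and existence of a unique positive minimizer $\tau_*$ follow by direct differentiation.

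Finally, to upgrade this scalar limit into a statement about the SER I would use the ``bad-set'' form of the CGMT. Fix $\eps>0$ and let $\Sc_{\mathrm{bad}}=\{\w\in\Sc:|\tfrac{1}{n}\sum_i\ind{\sign(x_{0,i}+w_i)\neq x_{0,i}}-Q(1/\tau_*)|>\eps\}$. In the rescaled parameterization the AO minimizer is explicit, $w_i^\star=\Pi_{[-1-x_{0,i},1-x_{0,i}]}(-h_i\tau_*)$, which gives $\sign(x_{0,i}+w_i^\star)\neq x_{0,i}$ exactly when $x_{0,i}h_i>1/\tau_*$, so its empirical SER converges to $Q(1/\tau_*)$ by the law of large numbers. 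Strict convexity of $F$ together with strict convexity of the per-coordinate quadratics forces any $\Theta(n)$-bulk deviation from this profile to increase the AO objective by a strictly positive amount, so the AO restricted to $\Sc_{\mathrm{bad}}$ has asymptotic value bounded away from $\tfrac{1}{2}F(\tau_*)$. Applying the CGMT transfers this separation to the PO, yielding the claimed $\plimn\,\SER=Q(1/\tau_*)$. The main obstacle is the discontinuity of the indicator in the SER: one must check that the limiting scalar distribution of $\hat{x}_i$ has no atom at the decision threshold $0$---which holds because $-h_i\tau_*$ has a continuous Gaussian density in the unclipped region---so that $\ell_2$-closeness of PO and AO optimizers translates to closeness of their empirical SERs.
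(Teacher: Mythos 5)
Your proposal is correct and follows essentially the same route as the paper: change of variables to the error vector, dualization of the norm to obtain the CGMT primary/auxiliary pair, scalarization via the square-root identity into a separable per-coordinate quadratic, LLN plus convexity to reach the deterministic function $F(\tau)$ and its unique minimizer $\tau_*$, and a strong-convexity ``bad-set'' argument transferred back through the CGMT. The only difference is organizational: the paper first proves convergence of $\tfrac{1}{n}\sum_i\psi(\hat\w_i)$ for Lipschitz $\psi$ (Theorem \ref{thm:general}) and then sandwiches the SER indicator by Lipschitz functions (Lemma \ref{lem:ind}), whereas you define the bad set directly through the SER and handle the indicator's discontinuity via the no-atom property of the limiting clipped-Gaussian law at the threshold --- the same underlying idea.
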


\begin{figure}[t]
    \centering
    \includegraphics[width=0.45\textwidth]{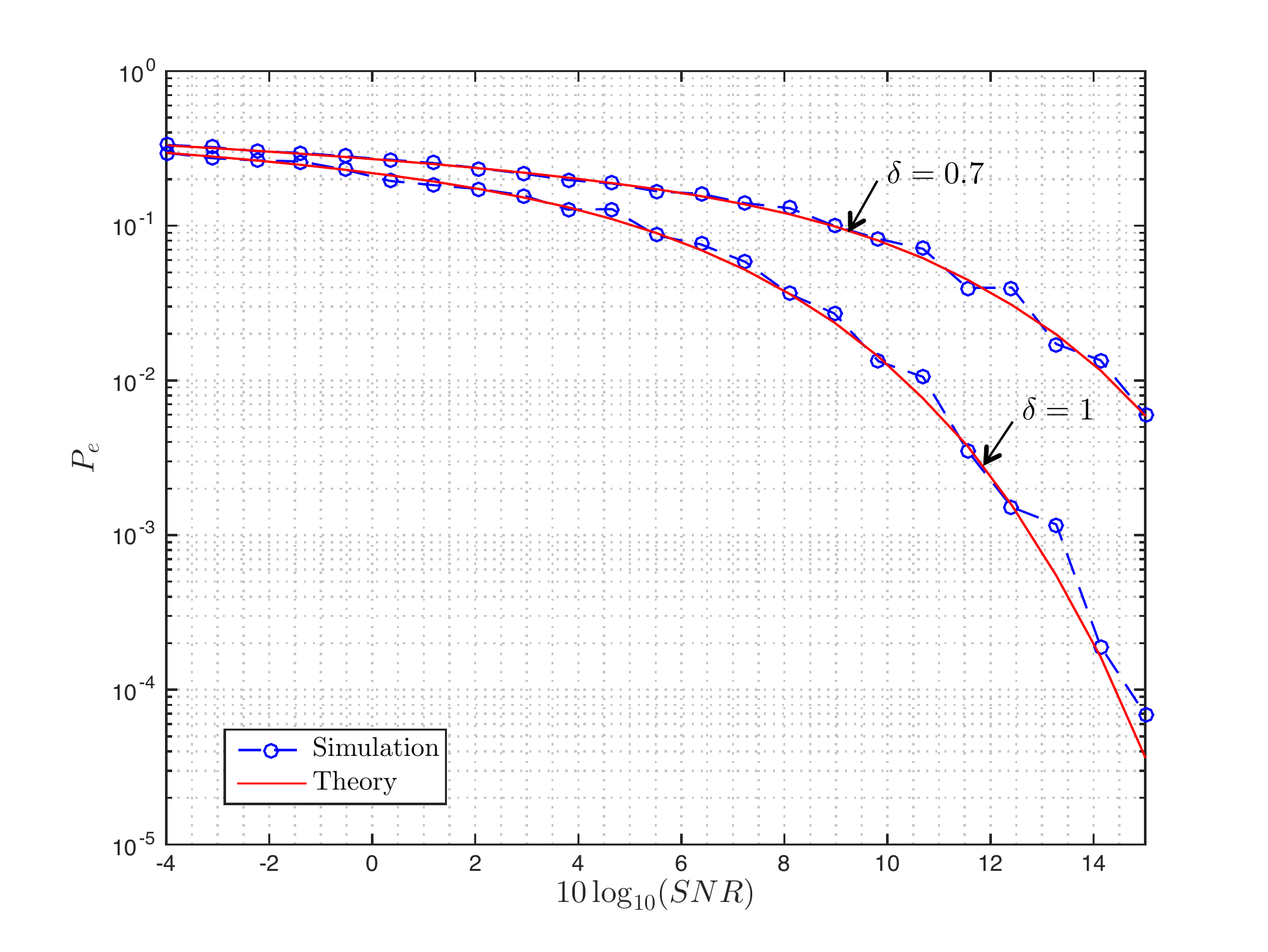}
    \caption{\captionsize{Symbol-error probability of the BRO as a function of $\SNR$ for different values of the ratio $\delta$ of receive to transmit antennas. The theoretical prediction follows from Theorem ~\ref{thm:main}. For the simulations, we used $n=512$. The data are sample averages of the $\SER$ over 20 independent realizations of the channel matrix and of the noise vector for each value of the $\SNR$.}}
    \label{fig:sim}
\end{figure}

The theorem explicitly characterizes the high-probability limit of the SER over the randomness of the channel matrix $\A$, and of the noise vector $\z$. The function $F(\tau)$ in \eqref{eq:F} is deterministic, strictly convex, and is parametrized by the value of the $\SNR$ and by the proportionality factor $\delta$.

The proof of the theorem uses the convex Gaussian min-max theorem (CGMT) \cite{COLT15,Master}, which has thus far found major use in precisely quantifying the squared-error performance of regularized M-estimators in high-dimensions, such as the LASSO \cite{Master}. In this paper we extend the applicability of the CGMT to the characterization of the SER performance, to arrive to Theorem \ref{thm:main}. More than that, along the way we prove a number of even stronger statements regarding the error performance of the BRO. We:
\begin{enumerate}[(i)]
\item establish the large-system error performance of the BRO for a wide class of performance metrics; this class includes the squared-error and the SER as special cases.
\item explicitly characterize the limiting empirical distribution of the output $\hat\x$ of \eqref{eq:LASSO}.
\item show that error events for any fixed number of bits are asymptotically independent.
\end{enumerate}
Please refer to Theorem \ref{thm:general} and to Corollary \ref{cor:indep} for the formal statements of these results. The detailed proof of Theorem \ref{thm:main} is also deferred to Section \ref{sec:proof}.


%

Some further remarks on Theorem \ref{thm:main} are given below.

\vp
\subsubsection{On $\delta>\frac{1}{2}$} The theorem  holds as long as the ratio of proportionality $\delta$ is (strictly) greater than 1/2. To begin with, note that this allows for the number of receive antennas to be less than the number of transmit antennas, and as low as (almost) half of them. When $\delta<1$ the system of linear equations $\y=\A\x$ is underdetermined; hence, recovering  the true solution is generally ill-posed even in the the absence of noise. However, in the problem of interest it is a-priori known that the true solution $\x_0$ only takes values $\{\pm 1\}^n$. The BRO decoder uses that information by enforcing an $\ell_\infty$-norm constraint in \eqref{eq:LASSO}. Of course, this idea of using convex optimization with (typically non-smooth) constraints that promote the particular structure of the unknown signal $\x_0$ to solve underdetermined system of equations,  is one of the core ideas that emerged from the Compressed Sensing literature (e.g. \cite{Cha}). In fact, it is by now well-understood that in the noiseless case the program in \eqref{eq:LASSO} successfully recovers the true $\x_0\in\{\pm1\}^n$ with high probability over the randomness of $\A$ if and only if $\delta>1/2$ (\cite{Cha,TroppEdge}). The same necessary condition naturally arises out of our proof of Theorem \ref{thm:main}.

\vp
\subsubsection{Probability of error}
Recall from \eqref{eq:Pe} that the symbol-error probability is given as $\Pe=\E[\BER]$. Also, the $\BER$ is bounded between $0$ and $1$. Thus, using Theorem \ref{thm:main} 
 we show in Appendix \ref{sec:cor_main_proof} that $\Pe$ converges (deterministically) to the same value $Q(1/\tau_*)$. 

\begin{cor}[$\Pe$]\label{cor:main}
Under the setting of Theorem \ref{thm:main}, let $\Pe$ denote the symbol-error probability  of the BRO and $\tau_*$ be the minimizer of \eqref{eq:F}. Then,
$$
\lim_{\nlim}\Pe = Q\left({1}/{\tau_*}\right).
$$
\end{cor}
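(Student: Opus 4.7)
The plan is to deduce the (deterministic) limit of $P_e$ from the in-probability limit of $\SER$ already established in Theorem~\ref{thm:main}, using the fact that the $\SER$ is a bounded random variable. This is a textbook application of bounded convergence: convergence in probability plus uniform boundedness implies convergence of expectations. Since $P_e = \E[\SER]$ by definition \eqref{eq:Pe}, this gives exactly the claim.

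Concretely, I would proceed as follows. First, recall from \eqref{eq:BER} that $\SER = \frac{1}{n}\sum_{i=1}^n \ind{\x^*_i \neq \x_{0,i}}$, so deterministically $0 \leq \SER \leq 1$ for every $n$. Second, invoke Theorem~\ref{thm:main}, which guarantees $\SER \xrightarrow{P} Q(1/\tau_*)$ as $n \to \infty$. Third, combine these two facts via the standard estimate: for any $\epsilon > 0$,
\begin{align*}
\big|P_e - Q(1/\tau_*)\big|
&\leq \E\big[|\SER - Q(1/\tau_*)|\big] \\
&\leq \epsilon + \E\big[|\SER - Q(1/\tau_*)|\,\ind{|\SER - Q(1/\tau_*)|>\epsilon}\big] \\
&\leq \epsilon + 2\,\Pr\big(|\SER - Q(1/\tau_*)|>\epsilon\big),
\end{align*}
where the last inequality uses $|\SER - Q(1/\tau_*)| \leq 2$ almost surely (since both terms lie in $[0,1]$). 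Letting $n \to \infty$ the probability term vanishes by Theorem~\ref{thm:main}, and then letting $\epsilon \downarrow 0$ gives $\lim_{n\to\infty} P_e = Q(1/\tau_*)$, as desired.

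There is no real obstacle here; the result is essentially a corollary in the strict sense. The only thing to be careful about is that the convergence in Theorem~\ref{thm:main} is in probability to a \emph{constant}, so the triangle inequality above is valid with the deterministic quantity $Q(1/\tau_*)$ pulled inside the expectation. One could alternatively appeal directly to the bounded convergence theorem (any uniformly bounded sequence converging in probability converges in $L^1$), but writing out the two-line split above keeps the proof self-contained and avoids any measurability subtleties.
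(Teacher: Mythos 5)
Your proof is correct and follows essentially the same route as the paper: the paper's appendix likewise splits the expectation $P_e=\E[\SER]$ over the events $\{|\SER-Q(1/\tau_*)|\leq\eps\}$ and its complement, uses $0\leq\SER\leq1$ to bound the contribution of the bad event, and invokes the in-probability convergence from Theorem~\ref{thm:main} to let that contribution vanish. Your single two-sided estimate via $|P_e-Q(1/\tau_*)|\leq\E[|\SER-Q(1/\tau_*)|]$ is just a slightly more compact packaging of the same bounded-convergence argument.
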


\vp
\subsubsection{Solving for $\tau_*$} In order to evaluate the large-system limit of the $\BER$, one needs to compute the unique positive minimizer of $F(\tau)$ in \eqref{eq:F}. The function $F$ is strictly convex, hence this can be done numerically in an efficient way. Due to convexity,  $\tau_*$ can also be described as the unique solution to the first order optimality conditions of the minimization program (see Lemma \ref{lem:tau_prop}). By further analyzing the properties of $\tau_*$, we derive in Section \ref{sec:BPSK_bounds} simple closed-form (upper and lower) bounds on the quantity of interest, namely $Q(1/\tau_*)$.

\vp
\subsubsection{Numerical illustration}  Figure \ref{fig:sim} illustrates the accuracy of the prediction of Theorem \ref{thm:main}. Note that although the theorem requires $n\rightarrow\infty$, the prediction is already accurate for $n$ on the scale of a few hundreds.

\subsection{Simple bounds and high-SNR regime}\label{sec:BPSK_bounds}

We derive simple, closed-form upper and lower bounds on $Q(1/\tau_*)$, the limiting value of the $\BER$. We further show that these bounds are tight. The proof is deferred to Appendix \ref{sec:proof_bounds}.

\begin{thm}[Closed-form bounds]\label{thm:simple}
Let $\tau_*$ be the unique minimizer of \eqref{eq:F}. Then, for all values of $\delta>1/2$ and all values of $\SNR>0$, it holds,
\begin{align}\label{eq:simple}
Q(\sqrt{\delta\cdot\SNR})< Q(1/\tau_*) \leq Q(\sqrt{(\delta-1/2)\cdot\SNR}).
\end{align}
Furthermore, the upper bound becomes tight as $\SNR\rightarrow+\infty$.
\end{thm}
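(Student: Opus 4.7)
My plan is to reduce both bounds to sign inequalities for a single auxiliary function extracted from the first-order optimality condition $F'(\tau_*)=0$. I first differentiate $F$. A pleasant cancellation occurs: the derivative of $-\sqrt{2/\pi}\,e^{-2/\tau^2}$ exactly cancels the $8\,e^{-2/\tau^2}/(\tau^3\sqrt{2\pi})$ piece produced by differentiating $(\tau+4/\tau)\,Q(2/\tau)$, leaving
\[
\tau^2 F'(\tau) = (\delta-\tfrac{1}{2})\tau^2 - \tfrac{1}{\SNR} + (\tau^2-4)Q(2/\tau) + \tfrac{2\tau}{\sqrt{2\pi}}\,e^{-2/\tau^2}.
\]
Setting this to zero at $\tau_*$ and substituting $u := 2/\tau_*$, $\phi(u) := (1/\sqrt{2\pi})e^{-u^2/2}$, the optimality condition rearranges into the compact identity
\[
\frac{1}{\tau_*^2} = \SNR\cdot\bigl[(\delta-\tfrac{1}{2}) + h(u)\bigr], \qquad h(u) := u\,\phi(u) + (1-u^2)\,Q(u).
\]

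Next I analyze $h$ on $[0,\infty)$. Using $\phi'(u)=-u\phi(u)$ and $Q'(u)=-\phi(u)$, a direct calculation produces a second, very clean cancellation: $h'(u) = -2u\,Q(u)$, which is strictly negative for $u>0$. Since $h(0)=1/2$ and $h(u)\to 0$ as $u\to\infty$ (use $u^2 Q(u) \le u\phi(u)\to 0$), the function $h$ strictly decreases from $1/2$ to $0$; in particular, $0 < h(u) < 1/2$ for every $u > 0$.

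Both bounds then fall out by reading off signs. Since $h(u) > 0$, the identity gives $1/\tau_*^2 > (\delta-1/2)\SNR$, hence $Q(1/\tau_*) < Q(\sqrt{(\delta-1/2)\SNR})$; this is the upper bound (in fact strict for every finite $\SNR$). Since $h(u) < 1/2$, we get instead $1/\tau_*^2 < \delta\cdot\SNR$, hence $Q(1/\tau_*) > Q(\sqrt{\delta\cdot\SNR})$, the lower bound. For tightness as $\SNR \to \infty$, the just-proved upper bound forces $\tau_* \to 0$, so $u = 2/\tau_* \to \infty$. The classical Mills expansion yields $h(u) \sim 2\phi(u)/u$, which is super-exponentially small in $u^2$; since $u^2$ grows linearly in $\SNR$, we obtain $\SNR \cdot h(u) \to 0$. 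Combined with $1/\tau_*^2 - (\delta-1/2)\SNR = \SNR\,h(u)$ and the asymptotic $Q(x) \sim \phi(x)/x$, this yields $Q(1/\tau_*)/Q(\sqrt{(\delta-1/2)\SNR}) \to 1$.

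The principal obstacle is spotting the derivative cancellation that produces $h'(u) = -2uQ(u)$: it is not a priori clear that the optimality condition repackages into a single auxiliary function whose monotonicity simultaneously controls both bounds. Once this structural observation is in hand, the rest of the argument reduces to reading off the sign of $h$ at two different heights.
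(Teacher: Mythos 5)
Your proof is correct and follows essentially the same route as the paper: your auxiliary function $h(u)=u\,\phi(u)+(1-u^2)Q(u)$ is exactly the paper's $G(\tau_*^{-1})=S_2'(\tau_*)$ in the rescaled variable $u=2/\tau_*$, your monotonicity claim $h'(u)=-2uQ(u)<0$ with range $(0,1/2)$ is the paper's Lemma on $S_\ell'$, and reading off the two bounds from the first-order condition mirrors the paper's argument via $H(u)=F'(u^{-1})$. The high-SNR tightness argument (super-exponential decay of $h$ forcing $\SNR\cdot h\to 0$, then Mills-ratio asymptotics for the $Q$-ratio) likewise matches the paper's treatment.
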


In view of Theorem \ref{thm:main}, the statement in \eqref{eq:simple} directly establishes upper and lower bounds on the (asymptotic) $\BER$ performance of the BRO. These bounds are given in closed-form and are simple functions of $\delta$ and of $\SNR$.

As stated in the theorem, the upper bound in \eqref{eq:simple} becomes tight in the high-SNR regime. Hence,
for $\SNR\gg1$, in the limit of $n\rightarrow\infty$,
\begin{align}\label{eq:BRO_high}
\BER \approx Q(\sqrt{\left(\delta-1/2\right)\cdot\SNR}).
\end{align}
A formal statement of this result is given in Theorem \ref{thm:high-SNR} in Appendix \ref{sec:proof_bounds}. The fact that $\tau_*\approx1/\sqrt{(\delta-1/2)SNR}$ when $\SNR\gg1$, can be intuitively understood as follows: at high-SNR we expect $\tau_*$ to be going to zero (correspondingly $\BER$ or $Q(1/\tau_*)$ to be small). When this is the case, the last two summands in \eqref{eq:F} are negligible; then, $\tau_*$ is the solution to $\min_{\tau>0}{\tau}\left(\delta-\frac{1}{2}\right)+\frac{1/\SNR}{\tau}$, which gives the derided result.

For illustration, in Figure \ref{fig:compare} we have plotted the high-SNR expression for the SER in \eqref{eq:BRO_high} versus its exact value as predicted by Theorem \ref{thm:main}. It is seen that, as already discussed, the high-SNR expression is an upper bound, and in fact a good proxy, for the true probability of error at all values of SNR. The approximation becomes better with increasing $\delta$.

Finally, in Section \ref{sec:BPSK_MFB}, we show that the lower bound $Q(\sqrt{\delta\cdot \SNR})$ has an operational meaning: it is equal to the bit error probability of an isolated bit transmission over the channel, which is also known as the \emph{matched filter bound} in digital communications.


\begin{figure}[t]
    \centering
    \includegraphics[width=0.5\textwidth]{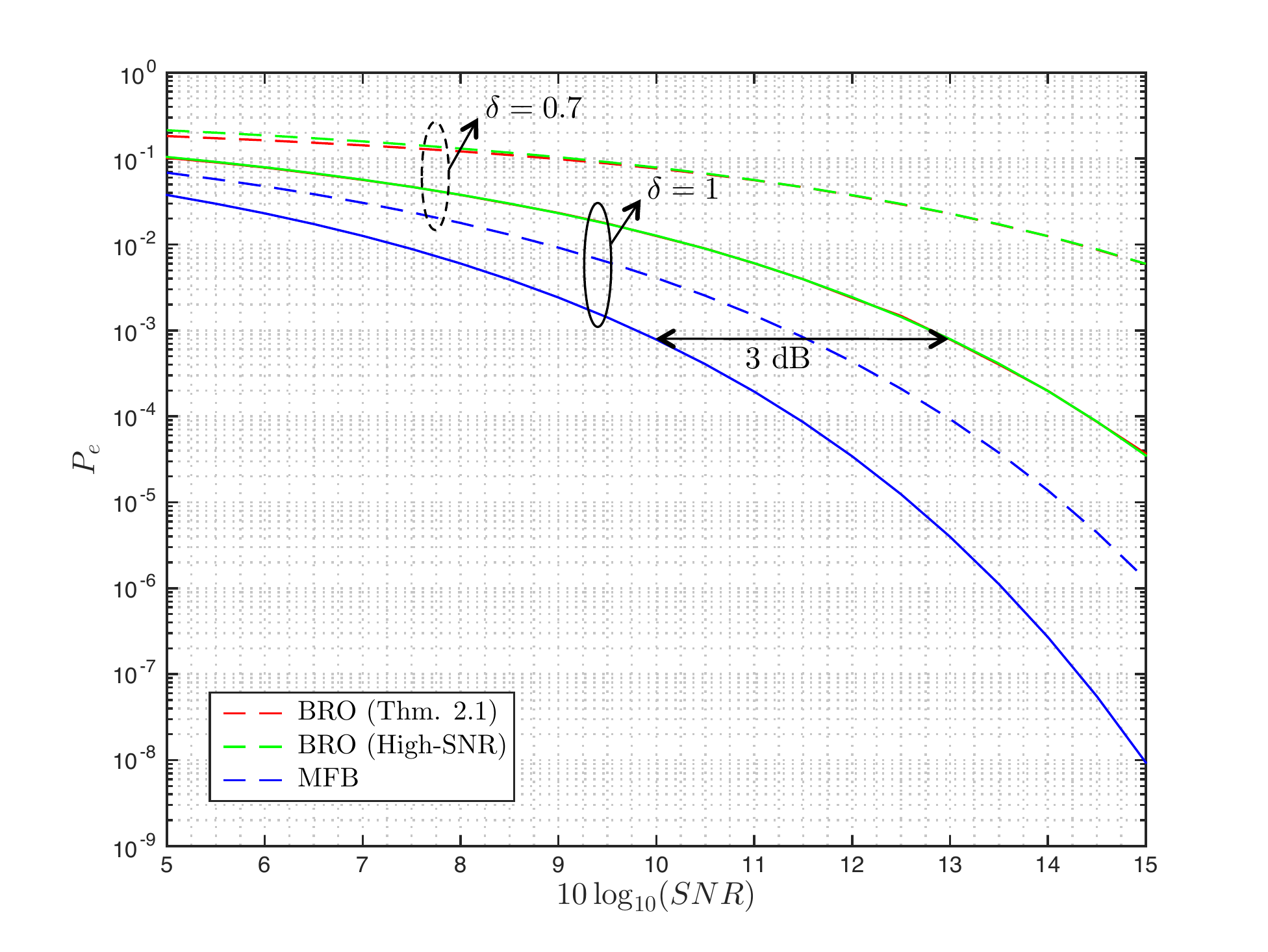}
    \caption{ {Symbol-error probability  of the BRO (in red, see Theorem \ref{thm:main}) in comparison to high-SNR approximation (in green, see \eqref{eq:BRO_high}) and to the matched filter bound (in blue, see \eqref{eq:MFB_high}), for $\delta=0.7$ (dashed lines) and $\delta=1$ (solid lines). Theorem~\ref{thm:simple} successfully predicts that the red curves are sandwiched between the corresponding green (upper-bound) and blue ones (lower-bound).
    }
    }
    \label{fig:compare}
\end{figure}

\subsection{Comparison to the matched filter bound}\label{sec:BPSK_MFB}

Here, we compare the performance of the BRO to an idealistic case, where all $n-1$, but $1$, bits of $\x_0$ are known to us.  As is customary in the field, we refer to the symbol error probability of this case as the \emph{matched filter bound} (MFB) and denote it by $\PeMF$. The MFB corresponds to the probability of error in detecting (say) $\x_{0,n}\in\{\pm1\}$ from:
$
\tilde\y = \x_{0,n}\ab_n + \z,
$
where  $\tilde\y=\y-\sum_{i=1}^{n-1}\x_{0,i}\ab_i$ is assumed known, and, $\ab_i$ denotes the $i^{th}$ column of $\A$. (This can be equivalently thought of as the error probability of an isolated transmission of only the last bit over the channel.)
The ML estimate is equal to the sign of the projection of the vector $\tilde{\y}$ to the direction of $\ab_n$. Without loss of generality we assume that $\x_{0,n}=+1$. Then, the output of the matched filter becomes $\sign(\tilde{X})$, where
$$
\tilde{X} = \|\ab_n\|^2+\sigma^2\tilde{\z}_n,
$$
and $\tilde{\z}_n\sim\Nn(0,1)$. Recall that the entries of the $m$-dimensional vector $\ab_n$ are iid $\Nn(0,1/n)$, so it holds $\plim_{\nlim}\|\ab_n\|=\delta$. Hence,
\begin{align}\label{eq:MFB_high}
\lim_{n\rightarrow\infty}\PeMF = \lim_{n\rightarrow\infty}\Pro(\tilde X<0) = Q(\sqrt{\delta\cdot\SNR}).
\end{align}

First, observe that this formula coincides with the lower bound on the probability of error of the BRO derived in Theorem \ref{thm:simple}. Combined, they establish formally that the MFB is (strictly) better than the BRO. Of course, this is naturally expected since the former is an idealistic scheme.

Next, when compared to the upper-bound on the probability of error of the BRO derived in Theorem \ref{thm:simple}, the formula in \eqref{eq:MFB_high}, leads to the following conclusion:

\begin{center}
\emph{
The BRO achieves a desired symbol-error probability at a higher SNR value by at most $10\log_{10}\frac{\delta}{\delta-1/2}$\dB than that predicted by the MFB.
}
\end{center}

\vp
In particular, in the square case ($\delta=1$), where the number of receive and transmit antennas are the same, the BRO is 3\dB off the MFB (cf., Figure~\ref{fig:compare}). When the number of receive antennas is much larger, i.e., when $\delta\gg 1$, then the performance of the BRO approaches the MFB.

\subsection{Box-relaxation vs Zero-forcing}\label{sec:BPSK_ZF}

In this section, we use Theorem \ref{thm:main} to compare the performance of the BRO to another widely used decoder, namely the \emph{zero-forcing} (ZF) decoder. The ZF decoder  obtains an estimate $\x^*_{\text{ZF}}$
 of $\x_0$ as follows
\begin{subequations}\label{eq:ZF_algo}
\begin{align}
\hat\x_{\text{ZF}} = \arg\min_{\x\in \R^{n}}\|\y-\A\x\|_2,\label{eq:ZF}\\
\x^*_{\text{ZF}} = \sign(\hat\x_\text{ZF}).
\end{align}
\end{subequations}
Observe that this is very similar to the BRO, only that in \eqref{eq:ZF} the minimization is \emph{unconstrained}. Therefore, in contrast to the BRO, for the ZF decoder we require $\delta>1$, i.e., the number of receive antennas be larger than the number of transmit antennas. When this is the case and $n$ is large, $\A$ is full column-rank with probability one, and \eqref{eq:ZF} has a unique closed-form solution:
\begin{align}\label{eq:ZF_sol}
\hat\x_{\text{ZF}} = (\A^T\A)^{-1}\A^T\y.
\end{align}

In particular, it is a well-known result in the literature how to use standard tools from random matrix theory to derive the symbol-error probability of the ZF decoder (e.g. \cite{hassibi2005sphere}). For convenience of the reader, we briefly summarize the main idea here.
Without loss of generality, consider the last bit $\x_n$ of $\x$. Further let $\A=\Qb\Rb$ be the QR decomposition of $\A$, such that $\Qb\in\R^{m\times n}$ is a matrix with orthogonal columns and $\Rb\in\R^{n \times n}$ is upper triangular. Define $\tilde{\y}:=\Qb^T\y$ and $\tilde{\z}:=\Qb^T\z$ and note that
$$\tilde{\y}_n=\Rb_{nn}\x_n+\tilde{\z}_n,$$
where $\Rb_{nn}$ is the $n^\text{th}$ diagonal element of $\Rb$.
From the rotational invariance of the Gaussian distribution, it holds $\tilde{\z}_n\sim\mathcal{N}(0, \sigma^2)$. Next, we use the following well-known facts, e.g., \cite[Lem.~1]{hassibi2005sphere}: (i) $\Qb$ and $\Rb$ are independent matrices. Hence, $\tilde{\z}_n$ is independent of $\Rb_{nn}$; (ii) $\Rb_{nn}$ 	is such that $n\Rb^2_{nn}$ is $\chi^2$ random variable with $(m-n+1)$ degrees of freedom.
Thus, by the corresponding formula for BPSK single-input single-output (SISO) Gaussian channel, the symbol-error probability of the zero-forcing decoder is
$$P_e^{ZF}=E_{\gamma_1,\ldots,\gamma_{m-n+1}} \Big[ Q\Big(\sqrt{\frac{\frac{1}{n}\sum_{i=1}^{m-n+1} \gamma_i^2  }{\sigma^2}   }\Big) \Big],$$
where $\gamma_i$'s are iid standard Gaussians $\mathcal{N}(0,1)$. But, $\plim_{\nlim}\frac{\sum_{i=1}^{m-n+1} \gamma_i^2}{n} =  (\delta-1)$, giving
\begin{align}\label{eq:ZF_P}
\lim_{n\rightarrow\infty}P_e^{ZF} = Q(\sqrt{(\delta-1)\cdot\SNR}).
\end{align}

Comparing this formula to the upper bound on the probability of error of the BRO derived in Theorem \ref{thm:simple}, we formally quantify the superiority of the BRO over the ZF decoder:

\begin{center}
\emph{
The BRO achieves the same performance as the ZF decoder at a lower $\SNR$ value by at least $10\log_{10}\left(\frac{\delta-\frac{1}{2}}{\delta-1}\right)$\dB.
}
\end{center}

\vp
This holds for $\delta>1$. However, Theorem \ref{thm:main} further shows that the BRO can successfully decode even when $\delta<1$, and in particular as low as $1/2$.
%

Above, we derived formula \eqref{eq:ZF_P} using tools from random matrix theory. Alternatively, we can obtain the same result using  the CGMT, and the proof technique is very similar to that of Theorem \ref{thm:main}. The use of random-matrix-theory tools for the analysis of the ZF decoder is in large possible because the minimizer $\hat\x_{\text{ZF}}$ of \eqref{eq:ZF} can be expressed in closed-form as a function of $\A$ and $\z$ (see \eqref{eq:ZF_sol}). On the contrary, this is not the case with the BRO decoder and the use of the CGMT is critical for establishing Theorem \ref{thm:main}.

\section{Extension to M-PAM constellations}\label{sec:MPAM}
%
%


\subsection{Setting}
Each transmit antenna sends a symbol $\x_{0,i}$ that take values $$\x_{0,i}\in\Cc:=\{\pm1,\pm3,\ldots,\pm (M-1)\},$$ for some $M=2^b$ and $b$ a positive integer. When each antenna transmits a single bit, i.e. $b=1$,  then $\x_0\in\{\pm1\}^n$ and the setting is the same as in Section \ref{sec:result}. As always, we assume additive Gaussian noise of variance $\sigma^2$.

The ML decoder is given by $\min_{\x\in\Cc^n}\|\y-\A\x\|_2$, but it is often computationally intractable for large number of receive/transmit antennas. We consider, the natural extension of the box-relaxation decoder for BPSK in \eqref{eq:algo}. Specifically, for M-PAM symbol transmission, the BRO outputs an estimate $\x^*$ of $\x_0$ as follows:
\begin{subequations}\label{eq:algo_MPAM}
\begin{align}
\hat\x = \arg\min_{-(M-1)\leq\x_i\leq (M-1)}\|\y-\A\x\|_2,\label{eq:LASSO_MPAM}\\
\x^*_i = \arg\min_{c\in\Cc}|\hat\x_i - c|. \label{eq:LASSO_thr}
\end{align}
\end{subequations}
The optimization in \eqref{eq:LASSO_MPAM} is convex, and \eqref{eq:LASSO_thr} simply selects the symbol value $c$ that is closest to the solution $\hat\x_i$ among a total of $M$ choices: $\{\pm1,\pm3,\ldots,\pm (M-1)\}$. Therefore, the proposed decoder is computationally efficient.
In the next section, we evaluate its error-rate performance. 

\subsection{SER performance}
Theorem \ref{thm:BER_PAM} below precisely characterizes the large-system limit of the SER of the BRO in \eqref{eq:algo_MPAM} under an M-PAM transmission. We assume that a \emph{typical sequence} of symbols is sent over the channel, i.e., each transmitted symbol $\x_{0,i}$ takes values $\{\pm1,\pm3,\ldots,\pm (M-1)\}$ with equal probability $1/M$. The result extends to other distributions over the constellation, but for simplicity we focus on this typical case. For a typical sequence, the average power of the transmitted vector $\x_0$ is
$$\E[\x_{0,i}^2]=({2}/{M})\sum_{i=1,3\ldots,M-1}i^2={(M^2-1)}/{3}.$$
Therefore,  the $\SNR$ of the system becomes
\begin{align}\label{eq:SNR_M}
\SNR = {{(M^2-1)}}/{3\sigma^2}.
\end{align}


\vp
\begin{thm}[$\SER$ for M-PAM]\label{thm:BER_PAM}
Let $\SER$ denote the symbol error rate of the detection scheme in \eqref{eq:algo_MPAM}, for a typical transmitted signal $\x_0$ such that each symbol $\x_{0,i}$ takes values $\{\pm1,\pm3,\ldots,\pm (M-1)\}$ with equal probability $1/M$. Fix a constant noise variance $\sigma^2$ (eqv., a constant $\SNR$ as in \eqref{eq:SNR_M}) and a constant $\delta\in(1-\frac{1}{M},+\infty)$. Then, in the limit of  $m,n\rightarrow\infty,~{m}/{n}=\delta$, it holds:
$$
\plimn~\SER = 2\left(1-\frac{1}{M}\right)Q\left(\frac{1}{\tau_*}\right),
$$
where $\tau_*$ is the unique positive minimizer of the strictly convex function $\FM:(0,+\infty)\rightarrow\R$ defined as:
\begin{align}\label{eq:FM}
\hspace{-6pt}\FM(\tau):=\frac{\tau}{2}\left(\delta-\frac{M-1}{M}\right)+\frac{\sigma^2}{2\tau} + \frac{1}{M}\sum_{k=2,4,\ldots,2(M-1)} S(\tau; k),
\end{align}
with,
\begin{align}\label{eq:S_thm}
S(\tau;k) := \left(\tau + \frac{k^2}{\tau}\right)Q\left(\frac{k}{\tau}\right) - \frac{k}{\sqrt{2\pi}} e^{-\frac{k^2}{2\tau^2}}.
\end{align}
\end{thm}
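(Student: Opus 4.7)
The plan is to follow the same CGMT-based pipeline used to prove Theorem \ref{thm:main} for BPSK, but to handle the asymmetric box constraints induced by the $M$-PAM constellation. First, I would introduce the error variable $\w := \hat\x - \x_0$. Since $\hat\x_i \in [-(M-1),M-1]$ and $\x_{0,i}\in\Cc$, the $i$th component of $\w$ is constrained to the interval $[\ell_i,u_i] := [-(M-1)-\x_{0,i},\,(M-1)-\x_{0,i}]$. The BRO \eqref{eq:LASSO_MPAM} becomes $\min_{\w_i\in[\ell_i,u_i]} \|\A\w-\z\|_2$, which after introducing a dual variable $\ub$ and normalizing yields a min-max Gaussian problem of the form $\min_\w\max_\ub \ub^T\A\w - \ub^T\z - \text{(quadratic in }\ub)$.

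Next, I would apply the CGMT (as in Theorem \ref{thm:general}) to replace the primary problem by the auxiliary optimization $\min_\w\max_\ub \|\w\|\,\g^T\ub/\sqrt{n} + \|\ub\|\,\h^T\w/\sqrt{n} - \ub^T\z - \tfrac{1}{2}\|\ub\|^2$, where $\g\in\R^m,\h\in\R^n$ are independent standard Gaussians. Carrying out the maximization over $\ub$ and simplifying along exactly the same lines as in the BPSK case (identifying the optimal norm $\|\w\|$ with a scalar $\alpha$ and scalarizing the problem through a dummy variable $\tau$) leads to a separable minimization over each coordinate $\w_i$ of the form $\min_{\w_i\in[\ell_i,u_i]} \tfrac{1}{2}\w_i^2 - \tau h_i \w_i$, whose optimizer is the Euclidean projection $\Pi_{[\ell_i,u_i]}(\tau h_i)$. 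Averaging the resulting scalar cost over $h_i\sim\Nn(0,1)$ and over the uniform distribution of $\x_{0,i}$ over $\Cc$ should produce exactly the deterministic scalar objective $\FM(\tau)$ in \eqref{eq:FM}. Concretely, for each admissible value $c\in\Cc$ the box $[\ell_c,u_c]$ is a shifted copy of length $2(M-1)$, and the Gaussian integrals $\E[(\tau h - \Pi_{[\ell_c,u_c]}(\tau h))^2]$ and $\E[h\,\Pi_{[\ell_c,u_c]}(\tau h)]$ produce tail integrals of the form $S(\tau;k)$ with $k=|\ell_c|$ and $k=|u_c|$; symmetrization in $c \mapsto -c$ and summing over $c\in\Cc$ recovers the sum over $k\in\{2,4,\ldots,2(M-1)\}$ in \eqref{eq:FM}.

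Once $\FM$ has been identified as the scalar surrogate and its strict convexity and coercivity verified (essentially inherited from strict convexity of the projection-based envelope), the CGMT machinery developed for Theorem \ref{thm:main} yields two further conclusions: (i) the optimal value $\|\hat\x-\x_0\|^2/n$ converges in probability to $\tau_*^2$, and (ii) the empirical distribution of the residuals $\{\hat\x_i - \x_{0,i}\}$ converges weakly to the law of $\Pi_{[\ell_{\x_{0}},u_{\x_{0}}]}(\tau_* H) - \tau_* H \cdot \mathbf{1}\{\tau_*H\in[\ell_{\x_0},u_{\x_0}]\}$ -- a standard Gaussian perturbation clipped to the shifted box, matching the analogue of Theorem \ref{thm:general} here. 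It is then a direct accounting exercise to convert this limiting law into the SER. The nearest-neighbor rule \eqref{eq:LASSO_thr} errs at coordinate $i$ iff $|\hat\x_i-\x_{0,i}|>1$; for the two extremal constellation points $\pm(M-1)$ (each of probability $1/M$) only one-sided errors are possible, producing $Q(1/\tau_*)$, whereas for each of the $M-2$ interior points two-sided errors give $2Q(1/\tau_*)$. Averaging yields $\frac{M-2}{M}\cdot 2Q(1/\tau_*) + \frac{2}{M}\cdot Q(1/\tau_*) = 2(1-\frac{1}{M})Q(1/\tau_*)$, as claimed.

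The main obstacle will be bookkeeping rather than any genuinely new idea. The BPSK proof exploited the symmetry $\x_{0,i}=\pm1$ to reduce the scalar analysis to a single box $[-2,0]$ (or its reflection), so the coordinate-wise cost could be computed in closed form with essentially one integral, producing the two $Q(2/\tau)$ terms in \eqref{eq:F}. For $M$-PAM, the per-coordinate cost depends on the value of $\x_{0,i}$ through the asymmetric interval $[\ell_c,u_c]$, and one must (a) carry out the Gaussian integrals over $\tau h$ for each asymmetric box, (b) show that after averaging over $c\in\Cc$ the many cross-terms collapse to the clean sum $\sum_{k=2,4,\ldots,2(M-1)} S(\tau;k)/M$, and (c) verify that the stationarity equation $\FM'(\tau_*)=0$ together with the limiting coordinate-wise law $\hat\x_i - \x_{0,i}\to \Nn(0,\tau_*^2)$ (conditionally on being an interior residual, away from the active box constraint) is consistent, so that the tail probability of the residual exceeding $1$ in absolute value is exactly $Q(1/\tau_*)$ rather than an integral against the truncated density. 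The latter step is the crucial one where the specific form of $S(\tau;k)$ and the first-order condition for $\tau_*$ must interact cleanly; I anticipate this to be the most delicate part of the argument, though structurally identical to the corresponding step in the BPSK proof.
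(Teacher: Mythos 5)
Your proposal follows essentially the same route as the paper's proof: pass to the error vector with the $\x_{0,i}$-dependent boxes $[\ell_i,u_i]$, apply the CGMT, scalarize to a per-coordinate projection of $\tau h_i$ onto the box, average over the Gaussian and the uniform constellation to obtain $\FM$ with the $S(\tau;k)$ terms for even $k$ up to $2(M-1)$, and then count errors separately for the $M-2$ interior symbols (two-sided, $2Q(1/\tau_*)$ each, using that the box edges lie at distance at least $2$) and the two extreme symbols (one-sided, $Q(1/\tau_*)$ each), giving $2(1-\tfrac{1}{M})Q(1/\tau_*)$. Two side claims are slightly off but immaterial to the theorem: the limiting residual law is simply the clipped Gaussian $\Pi_{[\ell,u]}(\tau_* H)$ (your displayed formula collapses to point masses at $\{0,\ell,u\}$), and $\|\hat\x-\x_0\|^2/n$ converges to the second moment of that clipped law (equal to $\delta\tau_*^2-\sigma^2$ by the stationarity condition), not to $\tau_*^2$.
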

%

\begin{figure}[t]
    \centering
    \includegraphics[width=0.45\textwidth]{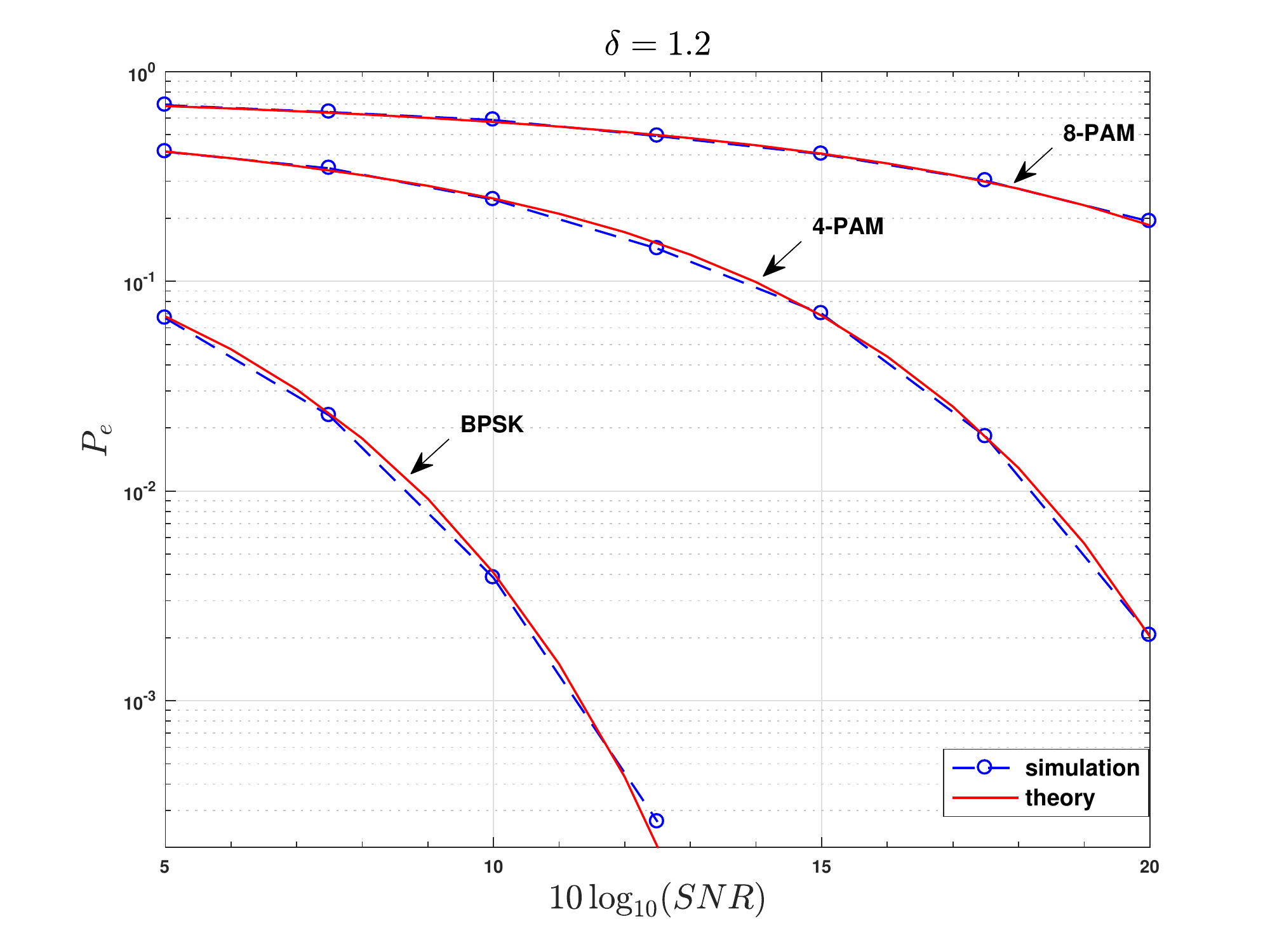}
    \caption{\captionsize{Symbol error probability of the Box Relaxation Optimization (BRO) in \eqref{eq:algo_MPAM}  as a function of the $\SNR$ for BPSK, 4-PAM and 8-PAM signals. The theoretical prediction follows from Theorem \ref{thm:BER_PAM}. For the simulations, we used n = 512 and $\delta=1.2$. The data are averages over 20 independent realizations of the channel matrix and of the noise vector for each value of the SNR.
    }
    }
    \label{fig:BER_MPAM}
\end{figure}

\vp
Theorem \ref{thm:BER_PAM} generalizes Theorem \ref{thm:main}, and the former reproduces the  latter for $M=2$. Figure \ref{fig:BER_MPAM} illustrates the accuracy of the prediction. The proof of the theorem is defered to  Appendix \ref{sec:proof_PAM}. 

Most of the remarks that followed the statement of Theorem \ref{thm:main} in Section \ref{sec:result}, are readily extended to  general M-PAM constellations. The guarantees of Theorem \ref{thm:BER_PAM} hold as long as the ratio of transmit to receive antennas $\delta$ is larger than $1-1/M$. Thus, successful transmission is possible with fewer number of receive than transmit antennas. The minimum allowed ratio increases for higher-order constellations.
Similar to Theorem \ref{thm:simple}, we can show the following simple upper bound on probability of error $\Pe$ for all values of $\SNR$:
\begin{align}\label{eq:up_M}
\lim_{\nlim} \Pe\leq 2\Big(1-\frac{1}{M}\Big)  Q\left({\sqrt{\Big(\delta-1+\frac{1}{M}\Big) \Big(\frac{3}{M^2-1}\Big)\SNR} }\right).
\end{align}
Moreover, the bound is \emph{tight} at high-SNR. Of course, for $M=2$, this coincides with the upper bound in \eqref{eq:simple}.


\section{Proof of main result}\label{sec:proof}
This section includes the proof of Theorem \ref{thm:main}. In fact, towards proving the theorem, we obtain a more general result which is stated as Theorem \ref{thm:general} below.

For simplicity, we make use of the following notation onwards. We say that an event $\Ec$ holds with probability approaching 1 (\emph{w.p.a.1}) if $\lim_{n\rightarrow\infty}\Pro(\Ec)=1$. Also, we use the following shorthands: $X_n\rP X$ to denote convergence in probability;  $X\stackrel{d}{=}Y$ to denote that the random variables $X$ and $Y$ have the same distribution; and, $\|\cdot\|$ to denote the n-dimensional Euclidean norm.

\subsection{Main technical result}
As far as the performance is concerned, we can assume without loss of generality that  $\x_0=+\mathbf{1}_n=(1,1,\ldots,1)$.
Also, it is convenient to re-write  \eqref{eq:LASSO} by changing the variable to the \emph{error vector} $\w:=\x-\x_0=\x-\mathbf{1}$:
\begin{align}\label{eq:LASSOw}
\hat\w := \arg\min_{-2\leq\w_i\leq 0} \|\z-\A\w\|.
\end{align}
\vspace{-2pt}
Then, observe that the SER defined in \eqref{eq:BER} is written in terms of the error vector $\w$ as:
\begin{align}\label{eq:BERw}
\BER=\frac{1}{n}\sum_{i=1}^n\ind{\hat\w_i\leq -1}.
\end{align}

\begin{figure}[t]
    \centering
    \includegraphics[width=1\columnwidth]{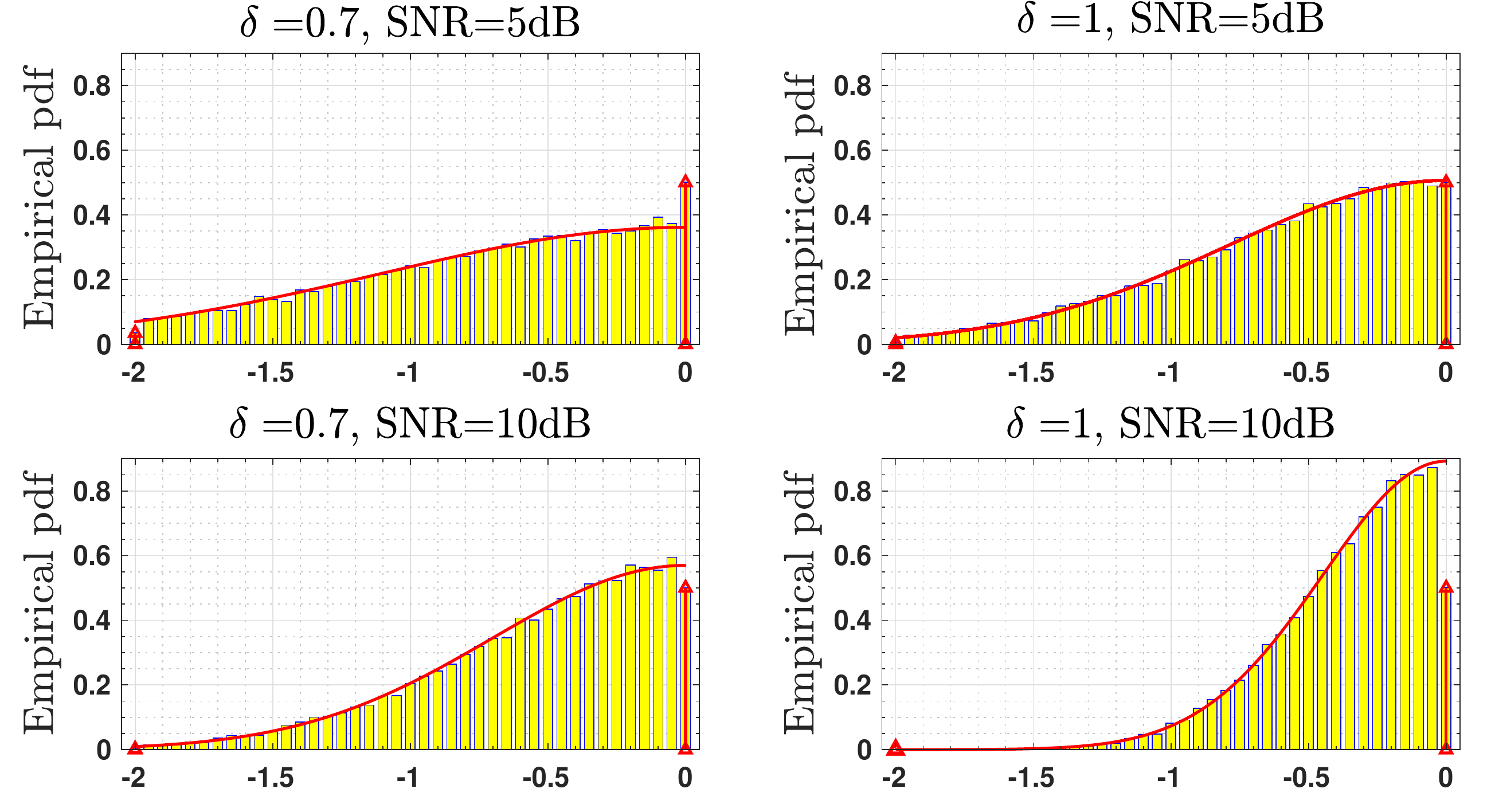}
    \caption{\captionsize{Empirical distribution of the error vector $\w:=\hat\x-\x_0$ (conditioned on $\x_0=+\mathbf{1}$) for the solution $\hat\x$ of the BRO. The empirical histograms shown are averages over $200$ realizations of the channel matrix and of the noise vector for $n=256$ number of transmit antennas.  They are compared to the asymptotic limiting distribution predicted by Theorem \ref{thm:general}, see \eqref{eq:W}. The limiting density is supported in the interval $[-2,0]$ and has point masses both at $-2$ and $0$. Different values of $\delta$ and of $\SNR$ are shown.
    }
    }
    \label{fig:Prob}
\end{figure}

The following theorem characterizes the limit of the empirical distribution of the optimal solution $\hat\w$ in \eqref{eq:LASSOw}, and yields Theorem \ref{thm:main} as a corollary.

\begin{thm}[Lipschitz metrics and empirical distribution]\label{thm:general}
Recall the definition of $\tau_*$ in Theorem \ref{thm:main}, and assume, without loss of generality, that $\x_0=+\mathbf{1}$.  Let $\wh$ be as in \eqref{eq:LASSOw} and consider its (normalized) empirical density function $$\mu_{\wh} := n^{-1}\sum_{i=1}^{n}\delta_{\wh_i}.$$ Further, consider the function $\theta:\R\rightarrow[-2,0]$:
\begin{align}\nn
\theta(\gamma) := \begin{cases}
0 &,\text{if } \gamma\geq 0,\\
\tau_*\gamma&,\text{if } -\frac{2}{\tau_*}\leq \gamma < 0, \\
-2&,\text{if }  \gamma< -\frac{2}{\tau_*},
\end{cases}
\end{align}
and let $\mu_W$ be the density measure of a random variable $W$
\begin{align}\label{eq:W}
W \stackrel{d}{=} \theta(\Nn(0,1)).
\end{align}
The following are true:

\vp
\noindent{{(a).}}~~ $\mu_{\wt}$ converges weakly in probability to $\mu_{W}$. \footnote{\label{foot:weak}Note that $\mu_{\wt}$ defines a (sequence of) \emph{random} probability measure(s); on the other hand, $\mu_W$ is a deterministic measure.  We use terminology that is standard in the theory of random matrices and say that a sequence of random measures $\mu_n$ converges weakly to a deterministic measure $\mu$ if for every continuous compactly supported $\psi$: $\int \psi \mathrm{d}\mu_n \rP \int \psi \mathrm{d}\mu$ (see for example \cite[pg.~160]{tao2012topics}).}

\vp
\noindent{{(b).}}~~ For all Lipschitz functions $\psi:\R\rightarrow\R$ with Lipschitz constant $L$ (independent of $n$), it holds $$\frac{1}{n}\sum_{i=1}^{n}\psi(\wh_i)\rP \E_{W}[\psi(W)].$$
%
%

\end{thm}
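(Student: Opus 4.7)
The plan is to apply the convex Gaussian min-max theorem (CGMT) of \cite{COLT15,Master}. Writing $\|\z-\A\w\|=\max_{\|\ub\|\leq 1}\ub^T(\z-\A\w)$ and setting $\G:=\sqrt{n}\,\A$, the program \eqref{eq:LASSOw} becomes the ``primary'' optimization
\[
\Phi_n \;=\; \min_{\w\in[-2,0]^n}\max_{\|\ub\|\leq 1}\big\{\tfrac{1}{\sqrt{n}}\ub^T\G\w - \ub^T\z\big\},
\]
which the CGMT couples to the ``auxiliary'' optimization
\[
\phi_n \;=\; \min_{\w\in[-2,0]^n}\max_{\|\ub\|\leq 1}\big\{\tfrac{1}{\sqrt{n}}(\|\w\|\g^T\ub+\|\ub\|\h^T\w)-\ub^T\z\big\},
\]
for independent $\g\sim\Nn(0,\I_m)$, $\h\sim\Nn(0,\I_n)$. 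Performing the inner maximization over $\ub$ (the optimizing norm will be $1$ at the candidate solution, which is to be checked) collapses the AO to $\phi_n=\min_\w\{\big\|\tfrac{\|\w\|}{\sqrt{n}}\g-\z\big\|+\tfrac{1}{\sqrt{n}}\h^T\w\}$. Using the variational identity $\sqrt{x}=\min_{\tau>0}\{x/(2\tau)+\tau/2\}$, the concentration facts $\|\g\|^2/n\to\delta$, $\|\z\|^2/n\to\delta\sigma^2$, $\g^T\z/n\to 0$, and an appropriate rescaling of $\tau$, the normalized AO $\phi_n/\sqrt{n}$ reduces, up to vanishing error, to a \emph{separable scalar} problem whose large-$n$ limit equals $\min_{\tau>0}F(\tau)/2=F(\tau_*)/2$, with $F$ as in \eqref{eq:F}. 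The per-coordinate minimizer comes out as $\hat w_i(\tau)=\mathrm{clip}_{[-2,0]}(-\tau h_i)$, and at $\tau=\tau_*$ its law is exactly $\mu_W=\theta(\Nn(0,1))$ by the symmetry $h_i\stackrel{d}{=}-h_i$.

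For part (b), I would implement the standard CGMT ``deviation'' scheme. Fix a Lipschitz $\psi:\R\to\R$ with constant $L$ and an $\epsilon>0$, and define the bad set
\[
\Sc_\epsilon\;:=\;\Big\{\w\in[-2,0]^n:\Big|\tfrac{1}{n}\sum_{i=1}^{n}\psi(w_i)-\E[\psi(W)]\Big|>\epsilon\Big\}.
\]
Let $\Phi_n(\Sc_\epsilon)$ and $\phi_n(\Sc_\epsilon)$ denote the PO and AO with $\w$ restricted to $\Sc_\epsilon$. By the CGMT, it suffices to establish, for some $\eta(\epsilon)>0$ independent of $n$,
\[
\plimn\tfrac{\phi_n(\Sc_\epsilon)}{\sqrt{n}}\;\geq\;\tfrac{F(\tau_*)}{2}+\eta(\epsilon)\;>\;\plimn\tfrac{\phi_n}{\sqrt{n}}\;=\;\tfrac{F(\tau_*)}{2},
\]
since the CGMT then transfers this strict gap to $\Phi_n(\Sc_\epsilon)>\Phi_n$ w.p.a.\,1, forcing $\hat\w\notin\Sc_\epsilon$. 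The strict inequality will rest on (i) strict convexity of $F$ and uniqueness of $\tau_*$, and (ii) uniqueness (and continuous dependence on $h$) of the per-coordinate minimizer of $w\mapsto w^2/(2\tau_*)+hw$ on $[-2,0]$; combined with the law of large numbers, any admissible empirical configuration whose $\psi$-average stays $\epsilon$-far from $\E[\psi(W)]$ is asymptotically strictly costlier than $F(\tau_*)/2$ in the scalar limit. Quantifying this gap $\eta(\epsilon)>0$ \emph{uniformly} over $\w\in\Sc_\epsilon$ and over $\tau$ in a suitable bounded interval---while controlling the Gaussian fluctuations of $\g,\h,\z$---is the main technical obstacle. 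I would address it by first proving pointwise-in-$\tau$ concentration of the scalar AO, then upgrading to uniform convergence via Lipschitz continuity in $\tau$ (as in the regularized M-estimator analysis of \cite{Master}), and restricting $\tau$ to a compact interval via coercivity of $F$.

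Part (a) follows from (b) by a density argument. Since $\hat w_i,W\in[-2,0]$, both $\mu_{\hat\w}$ and $\mu_W$ are supported on the compact interval $[-2,0]$, so any continuous compactly supported test function restricts there to a uniformly continuous function, which can be approximated to arbitrary sup-norm precision by Lipschitz functions (e.g.\ by mollification). A standard $3\epsilon$-argument then converts the Lipschitz-test convergence of (b) into the weak-convergence-in-probability claim of (a).
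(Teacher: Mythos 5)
Your overall strategy coincides with the paper's: identify the PO/AO pair for \eqref{eq:LASSOw}, scalarize the AO via the square-root variational identity, read off the clipped-Gaussian per-coordinate minimizer, and then run the CGMT deviation argument on the set $\Sc_\eps$ of \eqref{eq:S_eps}, with part (a) deduced from part (b) by approximating continuous compactly supported test functions by Lipschitz ones. (Two cosmetic remarks: your trick of keeping $\z$ separate and invoking $\g^T\z/n\rP 0$ is a workable variant of the paper's cleaner step of absorbing $\z$ into a single Gaussian vector of variance $\|\w\|^2+n\sigma^2$; and your normalized limit $F(\tau_*)/2$ is the correct constant.)

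However, there is a genuine gap exactly at the step you yourself flag as ``the main technical obstacle,'' and your proposed resolution does not address it. To verify condition (ii) of the CGMT you must lower-bound $\phi_{\Scc}(\g,\h)$ \emph{uniformly over all} $\w\in\Sc_\eps^c\cap[-2,0]^n$ by $\phi(\g,\h)+\eta(\eps)$ with $\eta(\eps)>0$ independent of $n$. Your plan (pointwise-in-$\tau$ concentration of the scalar AO, upgraded to uniform convergence over a compact $\tau$-interval, plus strict convexity of $F$ and uniqueness of the per-coordinate minimizer) only controls the \emph{scalar} reduction of the \emph{unrestricted} AO; uniformity in the one-dimensional variable $\tau$ is not the difficulty. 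The difficulty is that $\Sc_\eps^c$ is a high-dimensional set of vectors, so no concentration/union-bound or qualitative uniqueness statement yields a dimension-free cost gap; strict convexity of $F$ in $\tau$ carries no modulus and says nothing about vectors $\w$ far from the AO minimizer $\wt$ whose cost could a priori be nearly optimal. The paper closes this with two quantitative ingredients (its Lemmas \ref{lem:Lip_AO} and \ref{lem:strong_AO}): first, the Lipschitz property of $\psi$ converts membership in $\Sc_\eps^c$ into the deterministic distance bound $\|\w-\wt\|/\sqrt{n}\geq \eps/(2L)$ (using that the AO minimizer's $\psi$-average concentrates, which itself requires replacing the random $\taut$ and $\|\g\|/\sqrt{n}$ by their limits via a perturbation bound — another point you elide by setting $\tau=\tau_*$ directly); second, the AO objective $\w\mapsto\sqrt{\|\w\|^2/n+\sigma^2}\,\|\g\|/\sqrt{n}-\h^T\w/n$ is (w.p.a.1) strongly convex in $\w$, with strong-convexity constant of order $\sigma^2\sqrt{\delta}/\sqrt{\sigma^2+4}$ coming from the noise variance $\sigma^2>0$, so the distance bound translates into a cost increase depending only on $\eps$, $L$, $\sigma$, $\delta$. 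Without invoking this strong convexity in $\w$ (or an equivalent quantitative mechanism), your argument does not establish the gap $\eta(\eps)$ and the CGMT transfer to the PO cannot be completed.
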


Theorem \ref{thm:general} is the main technical result of this paper. In Section \ref{sec:Proof_main} we show how it can be used to prove Theorem \ref{thm:main}. Next, in Section \ref{sec:indep} we rely again on Theorem \ref{thm:general} to prove that error events for any fixed number of bits are asymptotically independent.
The rest of Section \ref{sec:proof} is devoted to the proof of Theorem \ref{thm:general}.
%


\subsection{Proof of Theorem \ref{thm:main}}\label{sec:Proof_main}
On the one hand, by \eqref{eq:BERw}, it suffices to prove that
$\frac{1}{n}\sum_{i=1}^n\ind{\wh_i\leq -1} \rP Q(1/\tau_*).$
On the other hand, it is easily checked that $\E_W[\ind{W\leq -1}] = \E_{\gamma\sim\Nn(0,1)}[\ind{\gamma\leq -1/\tau_*}]=Q(1/\tau_*).$ Note that the indicator function $\ind{W\leq -1}$ is not Lipschitz, so we cannot directly apply Theorem \ref{thm:general}(b). However, since the discontinuity point (i.e., $-1$) of the indicator function has $\mu_W$-measure zero, and also $W$ is a \emph{continuous} random variable, one can appropriately approximate the indicator with Lipschitz functions and conclude the desired based on Theorem \ref{thm:general}(b).
%
 This is a somewhat standard argument, but we reproduce a detailed proof of the claim in Lemma \ref{lem:ind} in Appendix \ref{sec:proof_proof} for completeness.


\subsection{Independence of Error Events}\label{sec:indep}

Here, we obtain as a corollary of Theorem \ref{thm:general} that error events for any fixed number of bits are \emph{asymptotically} independent. We defer the proof of the corollary to Appendix \ref{sec:indep_proof}.

\begin{cor}[Independence of error events]\label{cor:indep}
 Under the notation and definition of Theorem \ref{thm:general}, let $\psi_i:\R\rightarrow\R, i=1,\ldots,k$ be bounded Lipschitz functions for fixed $k\geq 2$. Then, it holds
$$
n^{-k}\sum_{1\leq i_1,\ldots,i_k\leq n} \psi_1(\hat\w_{i_1})\cdots\psi_k(\hat\w_{i_k}) \rP \prod_{\ell=1}^{k}\E[\psi_\ell(W_\ell)],
$$
where the expectations of the right-hand side are with respect to $W_1,\ldots,W_k$ that are iid random variables distributed as $\theta(\Nn(0,1))$. Moreover, it holds
$$
n^{-k}\sum_{1\leq i_1,\ldots,i_k\leq n} \ind{\hat\w_{i_1}\leq-1,\ldots,\hat\w_{i_k}\leq-1} \rP (Q(1/\tau_*))^k.
$$
\end{cor}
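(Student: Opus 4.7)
The plan is to exploit the fact that the multi-index sums in Corollary \ref{cor:indep} factorize into products of single-index sums, after which each factor can be handled directly by Theorem \ref{thm:general}(b) (for the Lipschitz version) or by Theorem \ref{thm:main} (for the indicator version). Explicitly, because the summation ranges independently over each $i_\ell\in\{1,\ldots,n\}$,
\begin{equation*}
n^{-k}\sum_{1\leq i_1,\ldots,i_k\leq n}\psi_1(\hat\w_{i_1})\cdots\psi_k(\hat\w_{i_k}) \;=\; \prod_{\ell=1}^{k}\Big(n^{-1}\sum_{i=1}^{n}\psi_\ell(\hat\w_i)\Big),
\end{equation*}
and an identical factorization holds for the indicator sum upon writing $\ind{\hat\w_{i_1}\le -1,\ldots,\hat\w_{i_k}\le -1}=\prod_\ell \ind{\hat\w_{i_\ell}\le -1}$. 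The corollary thus reduces to convergence of a fixed finite product of scalar averages to the product of the scalar limits.

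For the first statement, each $\psi_\ell$ is Lipschitz with a constant independent of $n$, so Theorem \ref{thm:general}(b) applied to each factor gives
\begin{equation*}
n^{-1}\sum_{i=1}^{n}\psi_\ell(\hat\w_i)\;\rP\;\E[\psi_\ell(W)],\qquad \ell=1,\ldots,k,
\end{equation*}
with $W\stackrel{d}{=}\theta(\Nn(0,1))$. Boundedness of the $\psi_\ell$ ensures each limit is finite. Since convergence in probability is preserved under the continuous product map $(a_1,\ldots,a_k)\mapsto \prod_\ell a_\ell$, the joint convergence yields
\begin{equation*}
\prod_{\ell=1}^{k}\Big(n^{-1}\sum_{i=1}^{n}\psi_\ell(\hat\w_i)\Big)\;\rP\;\prod_{\ell=1}^{k}\E[\psi_\ell(W)]\;=\;\prod_{\ell=1}^{k}\E[\psi_\ell(W_\ell)],
\end{equation*}
where the last equality just relabels $W$ by the iid copies $W_1,\ldots,W_k$ appearing in the statement.

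For the indicator version, the same factorization reduces the claim to
\begin{equation*}
n^{-1}\sum_{i=1}^{n}\ind{\hat\w_i\leq -1}\;\rP\; Q(1/\tau_*),
\end{equation*}
which is precisely what is established en route to Theorem \ref{thm:main} (see Section \ref{sec:Proof_main} and Lemma \ref{lem:ind}). Applying the continuous mapping theorem to the $k$-fold product of these convergences delivers the advertised limit $(Q(1/\tau_*))^k$.

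The only mildly delicate point — and the sole obstacle in the argument — is that $\ind{\cdot\leq -1}$ is not Lipschitz, so Theorem \ref{thm:general}(b) cannot be invoked directly on it. This is dispatched exactly as in the proof of Theorem \ref{thm:main}: one sandwiches $\ind{x\leq -1}$ between two Lipschitz functions that agree with it outside an $\eps$-neighborhood of $-1$, applies Theorem \ref{thm:general}(b) to each approximant, and then sends $\eps\to 0$, using that $\mu_W$ has no atom at $-1$ (on $(-2,0)$ the law of $W$ is absolutely continuous via $\theta$ applied to a Gaussian). Since this approximation is already carried out once for Theorem \ref{thm:main}, no new work is needed here, and the corollary follows.
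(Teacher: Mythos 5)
Your proof is correct and takes essentially the same route as the paper: reduce to the per-coordinate averages $n^{-1}\sum_i\psi_\ell(\wh_i)$ via Theorem \ref{thm:general}(b) (and, for the indicator version, to $n^{-1}\sum_i\ind{\wh_i\leq-1}\rP Q(1/\tau_*)$ from Theorem \ref{thm:main} together with the Lipschitz-approximation argument of Lemma \ref{lem:ind}), then pass to the $k$-fold product by the continuous mapping theorem. The only difference is that you invoke the exact factorization of the unrestricted multi-index sum, whereas the paper instead bounds the gap between $\prod_\ell\big(n^{-1}\sum_i\psi_\ell(\wh_i)\big)$ and the multi-index sum by $C/n$ using boundedness of the $\psi_\ell$ — a step that is needed only if one reads the sum as ranging over \emph{distinct} indices (the natural reading for independence of $k$ error events), in which case your argument would need the same easy correction: the tuples with a repeated index number $O(n^{k-1})$ and each term is bounded, so they contribute $O(1/n)$ after normalization.
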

\subsection{The convex Gaussian min-max theorem}
The fundamental tool behind our analysis is the convex Gaussian min-max theorem (CGMT) \citep{Master,COLT15}. The CGMT associates with a primary optimization (PO) problem a simplified auxiliary optimization (AO) problem from which we can tightly infer properties of the original (PO), such as the
optimal cost, the optimal solution, etc.. In particular, the (PO) and (AO) problems are defined respectively as follows:
\begin{subequations}\label{eq:POAO}
\begin{align}
\label{eq:PO_gen}
\Phi(\G)&:= \min_{\w\in\Sc_\w}~\max_{\ub\in\Sc_\ub}~ \ub^T\G\w + \psiubw,\\
\label{eq:AO_gen}
\phi(\g,\h)&:= \min_{\w\in\Sc_\w}~\max_{\ub\in\Sc_\ub}~ \|\w\|\g^T\ub - \|\ub\|\h^T\w + \psiubw,
\end{align}
\end{subequations}
where $\G\in\R^{m\times n}, \g\in\R^m, \h\in\R^n$, $\Sc_\w\subset\R^n,\Sc_\ub\subset\R^m$ and $\psi:\R^n\times\R^m\rightarrow\R$. Denote $\w_\Phi:=\w_\Phi(\G)$ and $\w_\phi:=\w_\phi(\g,\h)$ any optimal minimizers in \eqref{eq:PO_gen} and \eqref{eq:AO_gen}, respectively.
%
Further let $\Sc_\w,\Sc_\ub$ be convex and compact sets, $\psi$ be continuous and convex-concave on $\Sc_\w\times\Sc_\ub$, and, $\G,\g$ and $\h$ all have entries iid standard normal.

\begin{thm}[CGMT,~\citep{Master}]\label{thm:CGMT}
Let $\Sc$ be an arbitrary open subset of $\Sc_\w$ and $\Scc=\Sc_\w/\Sc$. Denote $\phi_\Scc(\g,\h)$ the optimal cost of the optimization in \eqref{eq:AO_gen}, when the minimization over $\w$ is now constrained over $\w\in\Scc$.
 Suppose there exist constants $\phio$ and $\eta>0$ such that in the limit of $n\rightarrow+\infty$
 it holds w.p.a.1: (i) $\phi(\g,\h)\leq\phio + \eta$, and, (ii) $\phi_{\Scc}(\g,\h)\geq \phio + 2\eta$.
%
  Then, $\lim_{n\rightarrow\infty}\Pr(\w_\Phi \in \Sc)=1$.
\end{thm}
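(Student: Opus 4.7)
The plan is to deduce the CGMT conclusion from two Gaussian-process comparison ingredients: (I) the classical Gordon min--max inequality, which requires no convexity and shows that $\Phi$ stochastically dominates $\phi$ (i.e., $\Pr(\Phi\geq c)\geq\Pr(\phi\geq c)$ for every $c\in\R$), and (II) a convex--concave strengthening of Gordon which, under the stated assumptions on $\psi,\Sc_\w,\Sc_\ub$, provides the matching reverse tail bound $\Pr(\Phi(\G)\geq c)\leq 2\,\Pr(\phi(\g,\h)\geq c)$ up to a universal factor of $2$. Denoting by $\Phi_\Scc(\G)$ the restricted (PO) in which the outer minimization is taken over $\w\in\Scc$ instead of $\w\in\Sc_\w$, the theorem will then follow from a short sandwich argument that separates $\Phi(\G)$ from $\Phi_\Scc(\G)$ on a high-probability event.

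In detail, I would fix any $\epsilon\in(0,\eta/2)$. Applying (II) at $c=\phio+\eta+\epsilon$ together with assumption (i) gives $\Pr(\phi(\g,\h)\geq\phio+\eta+\epsilon)\to 0$, and hence $\Phi(\G)<\phio+\eta+\epsilon$ w.p.a.1. Applying (I) to the restricted problem at $c=\phio+2\eta-\epsilon$ together with assumption (ii) gives $\Pr(\phi_\Scc(\g,\h)\leq\phio+2\eta-\epsilon)\to 0$, and hence $\Phi_\Scc(\G)>\phio+2\eta-\epsilon$ w.p.a.1. Since $\Sc_\w$ is compact and $\Sc_\w=\Sc\cup\Scc$ is a disjoint decomposition with $\Sc$ open, the (PO) minimum is attained and $\Phi(\G)=\min(\Phi_\Sc(\G),\Phi_\Scc(\G))$; on the intersection of the two w.p.a.1 events we therefore have $\Phi(\G)<\Phi_\Scc(\G)$ strictly, which forces every minimizer $\w_\Phi$ to lie in $\Sc$, as claimed. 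Observe that (I) applied to the restricted problem does not need $\Scc$ to be convex, since Gordon's comparison argument is insensitive to the geometry of the index set.

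The main obstacle, and the substantive content of Theorem \ref{thm:CGMT}, is to establish the convex--concave tail bound (II). Classical Gordon only yields $\Pr(\Phi\geq c)\geq\Pr(\phi\geq c)$, which is the ``wrong'' direction for converting an upper bound on $\phi$ into an upper bound on $\Phi$. The standard derivation given in \cite{COLT15,Master} uses Sion's minimax theorem to swap $\min_\w$ and $\max_\ub$ in the saddle functional of \eqref{eq:PO_gen} --- a step that is legitimate precisely because $\psi$ is convex--concave and $\Sc_\w,\Sc_\ub$ are both convex and compact --- then applies Gordon on the resulting max--min after a sign flip, and finally combines the two one-sided tail estimates by a union bound to pick up the factor $2$. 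With this lemma available as a black box, the three-line sandwich argument above completes the proof.
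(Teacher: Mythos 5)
The paper does not actually prove this theorem: it is imported as a black box from \citep[Thm.~6.1]{Master}, so there is no internal proof to compare against. Your sketch correctly reproduces the standard argument of \citep{COLT15,Master}: the Gordon direction, which needs no convexity and therefore applies to the restricted problem over the (generally non-convex) set $\Scc$, turns assumption (ii) into a high-probability lower bound on $\Phi_{\Scc}(\G)$; the convex--concave direction (obtained there by a min--max exchange justified by convex-concavity and compactness, followed by Gordon after a sign flip) turns assumption (i) into a high-probability upper bound on $\Phi(\G)$; and the strict separation $\Phi(\G)<\Phi_{\Scc}(\G)$ on the intersection of the two events forces $\w_\Phi\in\Sc$. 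The only imprecision is cosmetic: the Gordon comparison used in the cited works reads $\Pr\left(\Phi(\G)<c\right)\leq 2\Pr\left(\phi(\g,\h)\leq c\right)$ rather than exact stochastic domination, but the factor $2$ is immaterial for a w.p.a.1 conclusion, so your sandwich argument goes through unchanged.
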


It is not hard to argue that the conditions (i) and (ii) regarding the optimal cost of the (AO) imply the following for its solution: $\w_\phi \in \Sc$ w.p.a.1. The non-trivial and powerful part of the theorem is that the same conclusion is true for the optimal solution $\w_\Phi$ of the (PO) as well. The CGMT
builds upon a classical result due to Gordon \citep{GorThm}. Gordon's original result is classically used to establish non-asymptotic probabilistic lower bounds on the minimum singular value of Gaussian matrices \citep{vershynin2010introduction}, and has a number of other applications in high-dimensional convex geometry \citep{GorLem,ledoux}.
The idea of combining the GMT with convexity is attributed to Stojnic \citep{StoLASSO}. Thrampoulidis et. al. built and significantly extended on this idea arriving at the CGMT as it appears in \citep[Thm.~6.1]{Master}.

\subsection{Proof of Theorem \ref{thm:general}}

\subsubsection{Strategy}
We will first prove Theorem \ref{thm:general}(b); Part (a) will then follow by standard arguments from the theory of weak convergence.

As mentioned the proof is based on the use of the CGMT. The first step is to identify the (PO) and the (AO), such that $\wh$ is optimal for the (PO). Then, our goal is to apply Theorem \ref{thm:CGMT} to the following set
\begin{align}\label{eq:S_eps}
\Sc_\eps := \{\vb~:~ \big|n^{-1}\sum_{i=1}^n\psi({\vb_i}) - \E_W[\psi(W)] \big|< \eps \},
\end{align}
where $\eps>0$ is arbitrary. To see that this is desired note that if for all $\eps>0$ it holds $\w\in\Sc_\eps$ w.p.a.1, then $n^{-1}\sum_{i=1}^n\psi({\w_i}) \rP \E_W[\psi(W)]$. Thus, the bulk of the proof amounts to checking that the conditions of Theorem \ref{thm:CGMT} are satisfied for $\Sc_\eps$ in \eqref{eq:S_eps}. For the rest of the proof, we fix $\eps>0$ and denote $\Sc:=\Sc_\eps$, for convenience

%
%


\vp
\subsubsection{Identifying the (PO) and the (AO)}
Using the CGMT for the analysis of the  $\BER$, requires as a first step  expressing the optimization in \eqref{eq:LASSO} in the form of a (PO) as it appears in \eqref{eq:PO_gen}. It is easy to see that \eqref{eq:LASSOw} is equivalent to 
\begin{align}
\frac{1}{\sqrt{n}}~\min_{-2\leq\w_i\leq 0}\max_{\|\ub\|\leq 1} \ub^T\A\w-\ub^T\z.
\end{align}
Observe that the constraint sets above are both convex and compact; also, the objective function is convex in $\w$ and concave in $\ub$. These are consistent with the requirements of the CGMT. The corresponding (AO) problem becomes:
\vspace{-5pt}
\begin{align}\label{eq:AOw}
\phi(\g,\h):=\frac{1}{{n}}\min_{-2\leq\w_i\leq 0}\max_{\|\ub\|\leq 1} (\|\w\|\g-\sqrt{n}\z)^T\ub-\|\ub\|\h^T\w.
\end{align}
Note the normalization to account for the variance of the entries of $\A$. Onwards, we refer to the optimization in \eqref{eq:AOw} as the (AO) problem.
\vp
\subsubsection{Simplifying the (AO)}\label{sec:proof_simplify}
  We begin by simplifying the (AO) problem as it appears in \eqref{eq:AOw}. First, since both $\g$ and $\z$ have entries iid Gaussian, then, the vector $\|\w\|\g-\sqrt{n}\z$ has entries iid $\Nn(0,\sqrt{\|\w\|^2+n\sigma^2})$. Hence, for our purposes and using some abuse of notation so that $\g$ continues to denote a vector with iid standard normal entries, the first term in \eqref{eq:AOw} can be treated as $\sqrt{\|\w\|^2+n\sigma^2}\g^T\ub$, instead. As a next step,
fix the norm of $\ub$ to say $\|\ub\|=\beta$. Optimizing over its direction is now straightforward, and gives
$$\min_{-2\leq\w_i\leq 0}~\max_{0\leq \beta\leq 1} \frac{\beta}{{n}}\left(\sqrt{\|\w\|^2+n\sigma^2}\|\g\|-\h^T\w\right).$$
In fact, it is easy to now further optimize over $\beta$ as well: its optimizing value is $1$ if the term in the parenthesis is non-negative, and, is 0 otherwise. With this, the (AO) simplifies to the following:
\begin{align}\label{eq:AOw11}
\phi(\g,\h) = \min_{-2\leq\w_i\leq 0}\big(\sqrt{\frac{\|\w\|^2}{n}+\sigma^2}\frac{\|\g\|}{\sqrt{n}}-\frac{1}{{n}}\h^T\w\big)_+,
\end{align}
where we used the notation $(\cdot)_+:=\max\{\cdot,0\}$.

In order to perform the optimization over $\w$, we will express the ``square-root term" $\chi:=\chi(\w):=\sqrt{\|\w\|^2/n+\sigma^2}$  in a variational form. First, observe that  all  $\w\in[-2,0]^n$ satisfy $\sigma^2\leq\chi\leq 4+\sigma^2:=T$. Hence, we can write
$${\chi}=\min_{0\leq\tau\leq T}\frac{\tau}{2} + \frac{\chi^2}{2\tau}.$$
With this trick, the minimization over the entries of $\w$ becomes separable as follows:
\begin{align}\label{eq:AOw25}
\min_{\substack{0\leq\tau\leq T}}\frac{\tau\|\g\|}{2\sqrt{n}} + \frac{\sigg\|\g\|}{2\tau\sqrt{n}}+ \frac{1}{n}\sum_{i=1}^n \min_{-2\leq\w_i\leq 0} \left\{\frac{\gm}{2\tau \sqrt{n}}\w_i^2-{\h_i}\w_i\right\}.
\end{align}
In particular,
the optimal $\tilde\w_i:=\tilde\w_i(\g,\h)$ of \eqref{eq:AOw} satisfies
\begin{align}\label{eq:w_opt}
\tilde\w_i = \begin{cases}
0&,\text{if } \h_i\geq 0,\\
\frac{\taut\sqrt{n}}{\gm}\h_i&,\text{if } -\frac{2\gm}{\taut\sqrt{n}}\leq \h_i < 0, \\
-2&,\text{if }  \h_i< -\frac{2\gm}{\taut\sqrt{n}}.
\end{cases}
\end{align}
where, $\taut:=\taut(\g,\h)$ is the solution to the following:
%
\begin{align}\label{eq:AOw3}
\phi(\g,\h)=\Big(\min_{0\leq\tau\leq T} \frac{\tau\|\g\|}{2\sqrt{n}} + \frac{\sigg\gm}{2\tau\sqrt{n}} + \frac{1}{{n}}\sum_{i=1}^n \upsilon_n\left(\frac{\tau\sqrt{n}}{\|\g\|};\h_i\right) \Big)_+,
\end{align}
\vspace{-5pt}
with, $\upsilon_n:$
\begin{align}
\upsilon_n(\alpha;h) :=\begin{cases} 0&, \text{if } h\geq 0,\\
-\frac{\alpha}{2}h^2&, \text{if } -\frac{2}{\alpha}\leq h < 0, \\
\frac{2}{\alpha} + 2 h&, \text{if } h\leq -\frac{2}{\alpha},
\end{cases}\nn
\end{align}
for all $\alpha>0$ and $h\in\R$.
We remark that the minimization in \eqref{eq:AOw3} is convex. (The easiest way to see this is noting that the objective function in \eqref{eq:AOw25} is jointly convex in $\w$ and $\tau$).

\vp
\subsubsection{Convergence properties of the (AO)}\label{sec:proof_4}
Now that the (AO) is simplified as in \eqref{eq:AOw3}, we study here its behavior in the limit of $m,n\rightarrow\infty$ with $m/n=\delta$.

First, we compute the point-wise (in $\tau$) limit of the objective function in \eqref{eq:AOw3}.
Clearly,
 \begin{align}\label{eq:delta_lim}
 {\gm}/{\sqrt{n}}\rP \sqrt\delta.
 \end{align}
 Also, conditioned on the value of $n^{-1/2}\|\g\|$, the random variable $\sum_{i=1}^n\upsilon_n({\tau\sqrt{n}}/{\|\g\|};\h_i)$ is a sum of absolutely integrable iid random variables. Hence, combining the WLLN with \eqref{eq:delta_lim} it follows that, for all $\tau>0$,
 $$
\frac{1}{{n}}\sum_{i=1}^n \upsilon_n\left(\frac{\tau\sqrt{n}}{\|\g\|};\h_i\right)\rP
Y\left(\frac{\tau}{\sqrt{\delta}}\right)
$$
where,
\begin{align}
Y(\alpha) &:= -\frac{\alpha}{2}\int_{0}^{\frac{2}{\alpha}}h^2p(h)\mathrm{d}h + \frac{2}{\alpha}Q\left(\frac{2}{\alpha}\right) - 2\int_{\frac{2}{\alpha}}^{\infty}hp(h)\mathrm{d}h\nn
\\
&=-\frac{\alpha}{4} + \frac{\alpha}{2}\int_{\frac{2}{\alpha}}^{\infty}\left(h-\frac{2}{\alpha}\right)^2p(h)\mathrm{d}h.\label{eq:Y}
\end{align}

Next, the point-wise convergence implies uniform convergence, thanks to convexity. This follows from \uniform, which is also known in the literature of estimation theory as the convexity lemma: point wise convergence of convex functions implies uniform convergence in
compact subsets (see also \cite[Lem.~7.75]{chakraborty2008statistical}).
 Hence, the random optimization in \eqref{eq:AOw3} converges to the following deterministic optimization (for convenience we rescale the optimization variable $\tau$ as follows: $\tau:=\frac{\tau}{\sqrt{\delta}}$):
 \begin{align}\label{eq:lim_phi}
\overline{\phi}:= \min_{0\leq \tau\leq (T/\sqrt{\delta})}\frac{\tau\delta}{2}+\frac{\sigg}{2\tau}+Y(\tau).
 \end{align}
 Expanding the square in the second summand in \eqref{eq:Y} and applying integration by parts, it can be checked that the objective function in \eqref{eq:lim_phi} is exactly $2F(\tau)$, where $F(\tau)$ is defined in \eqref{eq:F}.

When $\delta>1/2$, all summands in the objective function in \eqref{eq:lim_phi} are  non-negative for all $\tau>0$. Thus, $\overline{\phi}\geq 0$, and consequently (recall \eqref{eq:AOw3}),
  \begin{align}\label{eq:phi_lim}
  \phi(\g,\h) \rP \overline{\phi}.
  \end{align}

We remark that the objective function in \eqref{eq:lim_phi} is strictly convex in the optimization variable $\tau$. (Its convexity follows directly as it is the point-wise limit of convex functions in \eqref{eq:AOw3}, which is known to be convex. Alternatively, and to further check strict convexity, it can be shown that the second derivative is positive.) Hence, there is a unique minimizer, call it  $\tau_*$.
With these, it only takes a standard argument (e.g., see \cite[Thm.~2.1]{NF36}) to further conclude that the minimizer $\taut(\g,\h)$ of \eqref{eq:AOw3} converges in probability to $\tau_*\sqrt{\delta}$, i.e.
\begin{align}\label{eq:tau_lim}
\delta^{-1/2}\taut(\g,\h) \rP \tau_*.
\end{align}
%


%
%
\vp
\subsubsection{The optimal solution of the (AO)}
We now have all the tools necessary to study the properties of the optimal solution $\wt$ of the (AO). The lemma below establishes that for Lipschitz functions, $\wt\in\Sc$ (recall the definition of $\Sc$ in \eqref{eq:S_eps}).


\begin{lem}[Lipschitz convergence of the (AO)]\label{lem:Lip_AO}
Let $\psi:\R\rightarrow\R$ be $L$-Lipschitz, $\wt=\wt(\g,\h)$ as in \eqref{eq:w_opt}, and random variable $W$ as in the statement of Theorem \ref{thm:general}. It holds,
$\frac{1}{n}\sum_{i=1}^{n}\psi(\wt_i)\rP \E_{W}[\psi(W)].$
%
\end{lem}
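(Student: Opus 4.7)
The plan is to reduce the claim to two applications of the weak law of large numbers (WLLN), by exploiting the explicit parametric form of the (AO) minimizer in \eqref{eq:w_opt} together with the convergence established in Section \ref{sec:proof_4}. First, I would introduce the ``clipped'' map
\begin{align*}
\theta_\alpha(\gamma) := \max\{\min\{\alpha\gamma,0\},-2\},\qquad \alpha>0,\gamma\in\R,
\end{align*}
so that \eqref{eq:w_opt} reads $\wt_i = \theta_{\alpha_n}(\h_i)$ with $\alpha_n := \taut\sqrt{n}/\gm$. From \eqref{eq:tau_lim} and \eqref{eq:delta_lim} I immediately obtain $\alpha_n \rP \tau_*$. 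Since $\h_i\simiid\Nn(0,1)$ and $\theta_{\tau_*}(\Nn(0,1))\stackrel{d}{=}W$ by \eqref{eq:W}, the goal reduces to showing
\begin{align*}
\frac{1}{n}\sum_{i=1}^{n}\psi(\theta_{\alpha_n}(\h_i)) \rP \E_{\gamma\sim\Nn(0,1)}[\psi(\theta_{\tau_*}(\gamma))].
\end{align*}

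Next, I would decompose the difference into $A_n + B_n$, where
\begin{align*}
A_n &:= \frac{1}{n}\sum_{i=1}^{n}\bigl[\psi(\theta_{\alpha_n}(\h_i))-\psi(\theta_{\tau_*}(\h_i))\bigr],\\
B_n &:= \frac{1}{n}\sum_{i=1}^{n}\psi(\theta_{\tau_*}(\h_i)) - \E_{\gamma\sim\Nn(0,1)}[\psi(\theta_{\tau_*}(\gamma))].
\end{align*}
The random variables $\psi(\theta_{\tau_*}(\h_i))$ are iid; since $\theta_{\tau_*}$ takes values in $[-2,0]$ and $\psi$ is Lipschitz (hence bounded on compact sets), they are uniformly bounded, and thus $B_n\rP 0$ by the WLLN. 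For $A_n$, the key observation is that $\theta_\alpha(\gamma)$ is a composition of $1$-Lipschitz functions of $\alpha\gamma$, so for every $\gamma\in\R$ and $\alpha,\beta>0$,
\begin{align*}
|\theta_\alpha(\gamma)-\theta_\beta(\gamma)|\leq |\alpha\gamma-\beta\gamma| = |\gamma|\cdot|\alpha-\beta|.
\end{align*}
Combined with the $L$-Lipschitzness of $\psi$, this yields the deterministic bound
\begin{align*}
|A_n| \leq \frac{L\,|\alpha_n-\tau_*|}{n}\sum_{i=1}^{n}|\h_i|.
\end{align*}
The WLLN gives $n^{-1}\sum_{i=1}^n |\h_i|\rP \E|\h_1|=\sqrt{2/\pi}$, hence this sum is $O_P(1)$. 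Combining with $\alpha_n-\tau_*\rP 0$ by Slutsky's theorem shows $A_n\rP 0$, and the lemma follows.

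The only mild subtlety is that $\taut=\taut(\g,\h)$ depends on the same vector $\h$ that appears in the sum, so $\alpha_n$ is \emph{not} independent of the $\h_i$'s; this is precisely the reason I avoid any conditional-WLLN argument for $A_n$ and instead use the deterministic Lipschitz bound above, which needs no independence. The remainder of the argument is the point-wise WLLN for $B_n$, which poses no difficulty.
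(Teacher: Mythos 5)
Your proposal is correct and follows essentially the same route as the paper: decompose the empirical average into the WLLN term evaluated at the deterministic scale $\tau_*$ (i.e., at $\theta(\h_i)$) plus a perturbation term, and kill the perturbation using the $L$-Lipschitzness of $\psi$ together with $\alpha_n := \taut\sqrt{n}/\gm \rP \tau_*$. The only (harmless) difference is in the final estimate: the paper bounds $|\wt_i-\theta(\h_i)|$ uniformly in $i$ via an explicit three-case analysis on $\h_i$ using $|\alpha_n-\tau_*|$ and $|2/\alpha_n-2/\tau_*|$ being small w.p.a.1, whereas you observe that the clipping map is $1$-Lipschitz so that $|\wt_i-\theta(\h_i)|\leq |\h_i|\,|\alpha_n-\tau_*|$, and then invoke the WLLN for $n^{-1}\sum_i|\h_i|$ --- a slightly cleaner path to the same conclusion.
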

\begin{proof}
For $i=1,\ldots,n$, define
$\vb_i:=\theta(\h_i)$ (recall the definition of $\theta$ in the statement of Theorem \ref{thm:general}).
The WLLN gives 
\begin{align}
n^{-1}\sum_{i=1}^{n}\psi(\vb_i)\rP \E_{\gamma\sim\Nn(0,1)}[\psi(\theta(\gamma))] =  \E_{W}[\psi(W)],
\end{align}
where we also used the Gaussianity of $\h_i$ and \eqref{eq:W}.
Hence, it will sufficec for the proof to show that $|n^{-1}\sum_{i=1}^{n}(\psi(\wt_i)-\psi(\vb_i))|\rP 0$.
 We show this using the Lipschitz assumption and \eqref{eq:tau_lim}. First, by the Lipschitz property: 
 \begin{align}\label{eq:L2}
 |\psi(\wt_i)-\psi(\vb_i)|\leq L|\wt_i-\vb_i|.
 \end{align}
 Next, the expression of $\wt$ in \eqref{eq:w_opt}, along with \eqref{eq:delta_lim} and with \eqref{eq:tau_lim}, they can be used to show that the RHS in \eqref{eq:L2} is appropriately small. Formally, writing $\ksi:=\ksi(\g,\h)=\frac{\taut\sqrt{n}}{\gm}$ for simplicity, it follows from the continuous mapping theorem that for some $\eta>0$ (the value of which to be chosen later) we have w.p.a.1: $|\ksi-\tau_*|\leq \eta$, and, $|\frac{2}{\ksi}-\frac{2}{\tau_*}|\leq \eta$. Hence, w.p.a.1:
\begin{align*}|\wt_i-\vb_i|&\leq\max\Big\{
|\tau_*-\ksi||\h_i|\ind{\h_i\geq\max{\{-2/\tau_*,-2/\ksi\}}},\\
&\qquad\qquad
|\tau_*\h_i+2|\ind{-2/\tau_*\leq\h_i\leq-2/\ksi},\\
&\qquad\qquad
|\ksi\h_i+2|\ind{-2/\ksi\leq\h_i\leq-2/\tau_*}
 \Big\}\ \\
 &\leq\eta(\eta+{2}/{\tau_*}) + \eta + \eta(\eta+\tau_*).
 \end{align*}
For any $\zeta>0$, choose $\eta=\min\{\frac{\sqrt{\zeta}}{2}~,~\frac{\zeta}{4}(\frac{1}{\tau_*}+\frac{1+\tau_*}{2})\}$, such that in view of \eqref{eq:L2} $ |\psi(\wt_i)-\psi(\vb_i)|\leq L\zeta$, which completes the proof.
\end{proof}


%
\vp\subsubsection{Satisfying the conditions of the CGMT}
The following result uses Lemma \ref{lem:Lip_AO} and strong-convexity of the (AO) to show that the optimal cost of the (AO) strictly increases when the optimization is constrained outside the set $\Sc$ defined in  \eqref{eq:S_eps}. The proof is deferred to Appendix \ref{sec:proof_strong_AO}.

\begin{lem}[Strong convexity of the (AO)]\label{lem:strong_AO}
Let $\psi:\R\rightarrow\R$ be L-Lipschitz, $W$ a random variable as in the statement of Theorem \ref{thm:general}, and $\Sc:=\Sc_\eps$ the set defined in \eqref{eq:S_eps}. Finally, denote $f(\w):=f(\w;\g,\h)$ the objective function in \eqref{eq:AOw11}.
There exists constant $C>0$, such that the following statement holds w.p.a.1,
$$
\min_{\substack{\w\in[-2,0]^n\\\w\in\Scc}}f(\w;\g,\h) \geq \phi(\g,\h) + \frac{C\eps}{L}.
$$
\end{lem}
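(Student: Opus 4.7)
\emph{Setup and reduction.} The plan is to combine Lemma~\ref{lem:Lip_AO} with a strong-convexity estimate for the (AO) objective $f$ in \eqref{eq:AOw11}. Since $\delta>1/2$, \eqref{eq:lim_phi} gives $\phio>0$, so by \eqref{eq:phi_lim} $\phi(\g,\h)>0$ w.p.a.1. Writing $f(\w)=(g(\w))_+$ with
\begin{equation}\label{eq:pf_gdef}
g(\w):=\frac{\|\g\|}{\sqrt n}\sqrt{\frac{\|\w\|^2}{n}+\sigg}-\frac{1}{n}\h^T\w,
\end{equation}
the fact that $f(\w)\ge\phi(\g,\h)>0$ on the entire feasible set forces $g(\w)>0$ on $[-2,0]^n$, so that $f\equiv g$ there w.p.a.1. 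Hence it suffices to lower bound $g(\w)-g(\wt)$ for $\w\in\Scc\cap[-2,0]^n$.

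\emph{Strong convexity of $g$.} The main technical input is that $g$ is strongly convex on $[-2,0]^n$ with a dimension-free coefficient once distances are measured in the normalized norm $\|\cdot\|/\sqrt n$. A direct Hessian computation for $h(\w):=\sqrt{\|\w\|^2/n+\sigg}$ gives
$$\nabla^2 h(\w)=\frac{1}{n\sqrt q}\,I-\frac{\w\w^T/n^2}{q^{3/2}},\qquad q:=\frac{\|\w\|^2}{n}+\sigg,$$
whose smallest eigenvalue, attained in the $\w$-direction, equals $\sigg/(nq^{3/2})$. The box constraint keeps $q\le 4+\sigg$ uniformly on $[-2,0]^n$, while \eqref{eq:delta_lim} gives $\|\g\|/\sqrt n\ge \sqrt\delta/2$ w.p.a.1. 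Thus, w.p.a.1, $\nabla^2 g\succeq (\mu_0/n)I$ on the feasible set for the deterministic constant $\mu_0:=\tfrac12\sqrt\delta\,\sigg/(4+\sigg)^{3/2}>0$. Since $\wt$ minimizes the strongly convex $g$ over the convex feasible set, the standard first-order strong-convexity inequality yields
\begin{equation}\label{eq:pf_sc}
g(\w)-g(\wt)\ge \frac{\mu_0}{2n}\|\w-\wt\|^2\qquad\text{for all } \w\in[-2,0]^n.
\end{equation}

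\emph{From set membership to a distance bound.} By Lemma~\ref{lem:Lip_AO}, $|n^{-1}\sum_{i=1}^n\psi(\wt_i)-\E_W[\psi(W)]|<\eps/2$ w.p.a.1. Combined with $\w\in\Scc$, which by \eqref{eq:S_eps} means $|n^{-1}\sum_i\psi(\w_i)-\E_W[\psi(W)]|\ge\eps$, the triangle inequality gives $|n^{-1}\sum_{i=1}^n(\psi(\w_i)-\psi(\wt_i))|\ge \eps/2$. The $L$-Lipschitz property of $\psi$ together with Cauchy--Schwarz then yields
$$\frac{\eps}{2}\le \frac{L}{n}\sum_{i=1}^n|\w_i-\wt_i|\le \frac{L}{\sqrt n}\|\w-\wt\|,$$
so $\|\w-\wt\|^2/n\ge \eps^2/(4L^2)$. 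Plugging this into \eqref{eq:pf_sc} gives the desired gap $g(\w)-g(\wt)\ge \mu_0\eps^2/(8L^2)$ uniformly over $\Scc\cap[-2,0]^n$, which is the claim of the lemma with $C$ absorbing the fixed ratio $\mu_0\eps/(8L)$. The main obstacle here is the strong-convexity estimate of the previous paragraph: $\sqrt{\|\w\|^2/n+\sigg}$ is not strongly convex on $\R^n$ (its Hessian vanishes as $\|\w\|\to\infty$), so the dimension-free lower bound on $\nabla^2 h$ relies essentially on the box constraint bounding $q$ above, which is precisely what balances the $\|\w-\wt\|^2=\Theta(n)$ bound from Cauchy--Schwarz into a $\Theta(1)$ cost gap.
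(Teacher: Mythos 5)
Your proposal is correct and takes essentially the same route as the paper's proof: Lemma~\ref{lem:Lip_AO} together with the definition of $\Sc_\eps$, the triangle inequality, and the Lipschitz/Cauchy--Schwarz step give $\|\w-\wt\|/\sqrt{n}\geq \eps/(2L)$, while a Hessian lower bound (using $\|\g\|/\sqrt{n}\geq \sqrt{\delta}/2$ w.p.a.1 and the box constraint) gives dimension-free strong convexity of the (AO) objective, and combining the two yields the cost gap. Your extra care in reducing $f$ to $g$ via $\phi(\g,\h)>0$ and the exponent $3/2$ in the Hessian bound are in fact slightly more precise than the paper's displayed inequality, and the resulting gap of order $\eps^{2}/L^{2}$ is what the paper's argument actually delivers as well (the constant in the lemma's statement simply absorbs the fixed factor).
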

%

The lemma above essentially verifies conditions (i) and (ii) of the CGMT Theorem \ref{thm:CGMT}.
To be specific,
let $C$ as in the statement of Lemma \ref{lem:strong_AO}, $\phio$ as in \eqref{eq:lim_phi}, and, choose $\eta := \frac{C\eps}{3L}$. From \eqref{eq:phi_lim} it holds w.p.a.1: $|\phi(\g,\h)-\phio|\leq \eta$. Combine this  with Lemma \ref{lem:strong_AO} to conclude that $\phi_{\Scc}(\g,\h)\geq\phio+2\eta$ w.p.a.1, as desired.


%
\vp
\subsubsection{Completing the proof}
At the end of last section we showed that the conditions of the CGMT Theorem \ref{thm:CGMT} are satisfied. Hence, its application yields that any minimizer $\wh$ of the (PO) in \eqref{eq:LASSOw} satisfies $\wh\in\Sc_\eps$ w.p.a.1. This proves part (b) of Theorem \ref{thm:general}. It remains to prove Part (a). Recall the note in Footnote \ref{foot:weak}: it suffices to prove that
\begin{align}\label{eq:Cc}
\frac{1}{n}\sum_{i=1}^{n}\psi(\wh_i)\rP \E_{W}[\psi(W)],
\end{align} for all continuous functions with compact support. Of course, the statement in \eqref{eq:Cc} is true for Lipschitz continuous functions from part (b) of the theorem. But, continuous compactly supported functions are also bounded. The implication from Lipschitz bounded functions to continuous bounded functions is standard and is part of what is known in the literature as the Portmanteau Theorem; see for example \cite[Thm.~13.16]{klenke2013probability}. 
%



\section{Discussion and Future work}\label{sec:conc}
In this paper we have used the recently developed CGMT framework in \citep{COLT15,Master} to precisely compute the large-system error-rate performance of the popular box-relaxation optimization method for recovering signals from M-ary constellations, when the channel matrix and additive noise are both iid real Gaussians. The derived formulas were previously unknown. Also, the CGMT was previously only used to analyze squared-error performance; here, we illustrate for the first time its use to analyze the error-rate performance of convex optimization-based massive MIMO decoders. 

In future work, we seek to extend the analysis to complex Gaussian channels with symbols originating from complex-valued constellations. At its core, this task requires extending the CGMT to complex-valued Gaussian matrices, an extension that is currently unavailable; thus, it poses a challenging, yet practically important, research direction. What appears more accessible is establishing the universality of our results for iid channels beyond Gaussians.  We believe that this is possible by combining the ideas of our paper for extended use of the CGMT with the techniques in \cite{oymak2015universality}, where the universality property has been proven for the squared-error (rather than for the symbol-error-rate).


%
%

For BPSK signal recovery using the BRO, we proved in Corollary \ref{cor:indep} that error events for any fixed number of bits in the solution of the BRO are iid. This fact has potentially significant consequences to be explored. For example, it implies that, when a block of data is in error, only a few of its bits are. This means that the output of the BRO can be used by various local methods to further reduce the SER. We are planning to explore such implications further in future work.

{\small
\bibliography{compbib}
}
\appendix
\subsection{Supplementary proofs for Section \ref{sec:result}}


\subsubsection{Corollary \ref{cor:main}}\label{sec:cor_main_proof}
The corollary follows from Theorem \ref{thm:main} when combined with the following statement, which we prove here:{``If  $\text{SER}(\A,\z) \rP c   $, for some deterministic constant $c$,  then, $\Pe \rightarrow c. $ ''}


For convenience, let us define the random variable $X:=X(\A,\z):=\text{SER}(\A,\z)$. With this notation, $\Pe = \E_{\A,\z}[X]$. Thus, for any $\epsilon>0$, 
\begin{align*}
P_e  &\leq \E\left[X~|~ |X-c|\leq \epsilon \right] + 
\\&\qquad\quad\E\left[X~|~ |X-c|> \epsilon\right]\cdot\Pro\left(|X-c|> \epsilon\right).
\\&\leq (c+\epsilon)+\Pro(|X-c|> \epsilon),
\end{align*}
where in the second inequality we used the fact that $X\leq 1$. Notice that $(c+\epsilon)+P(|X-c|> \epsilon) \rightarrow (c+\epsilon)$ as $n \rightarrow \infty$, since $X \rP c  $, by assumption. In a similar vein,
\begin{align*}
P_e  &\geq \E\left[X~|~ |X-c|\leq \epsilon \right]\cdot\Pro\left(|X-c|\leq \epsilon\right)
\\&\geq (c-\epsilon)\cdot\Pro(|X-c|> \epsilon),
\end{align*}
%
where, again, $(c-\epsilon)\Pro(|X-c|\leq \epsilon) \rightarrow (c-\epsilon)$ as $n \rightarrow \infty$, since $X \rP c  $.
Since the above hold for all $\eps$, we have shown that $\Pe\rightarrow c$, as desired.

\subsubsection{Proof of Theorem \ref{thm:simple}}\label{sec:proof_bounds}
Here, we prove the first part of the theorem, namely the lower and upper bounds on $Q(1/\tau_*)$. The tightness of the upper bound at high-SNR is shown later in Section \ref{sec:high-SNR}. 
Due to the decreasing nature of the function $Q$, it suffices to prove that
\begin{align}\label{eq:bounds_tau}
\sqrt{(\delta-1/2)\cdot\SNR}<\tau_*^{-1}<\sqrt{\delta\cdot\SNR}.
\end{align}
This is shown in Lemma \ref{lem:tau_prop}(b) below. 
The proof of the lemma builds on understanding the behavior of the function $F(\tau)$ in \eqref{eq:F}. The function $F$ is composed of 4 additive terms. The first is linear in $\tau$ and the second is simply $1/\tau$. We view the remaining terms as a single function of $\tau$, namely $S_2(\tau):=\Big(\tau+\frac{4}{\tau}\Big)Q\left(\frac{2}{\tau}\right)-\sqrt{\frac{2}{\pi}}e^{-\frac{2}{\tau^2}}$, and we gather its properties in Lemma \ref{lem:technical} below. 

\begin{lem}\label{lem:technical}
Fix a positive integer $\ell>0$ and consider the function $S:(0,\infty)\rightarrow\R$ defined as follows:
\begin{align}
S_\ell(\alpha):=S(\alpha;\ell) :=\big(\alpha + \frac{\ell^2}{\alpha}\big)Q\Big(\frac{\ell}{\alpha}\Big) - \frac{1}{\sqrt{2\pi}}\ell e^{-\frac{\ell^2}{2\alpha^2}}.
\end{align}
The following statements are true.

\begin{enumerate}[(a)]
\item The first two derivatives $S_\ell'(\alpha)$ and $S_\ell''(\alpha)$ are given as follows
\begin{align}
S_\ell'(\alpha)&= \frac{\ell}{\alpha\sqrt{2\pi}}e^{-\frac{\ell^2}{2\alpha^2}} + \Big( 1 - \frac{\ell^2}{\alpha^2} \Big)Q\Big(\frac{\ell}{\alpha}\Big).\label{eq:S'} \\
S_\ell''(\alpha)&=2\frac{\ell^2}{\alpha^2}Q\Big(\frac{\ell}{\alpha}\Big). \nn
\end{align}
\item The function $S_\ell(\alpha)$ is strictly convex.
\item The derivative $S_\ell'(\alpha)$ is strictly increasing. Moreover
$$
\lim_{\alpha\rightarrow 0^+}S_\ell(\alpha) = 0 < S_\ell'(\alpha) < \frac{1}{2}= \lim_{\alpha\rightarrow+\infty}S_\ell(\alpha).
$$
\end{enumerate}

\end{lem}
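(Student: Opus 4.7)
The entire statement is a routine calculus exercise once part (a) is established, so the plan is to compute the derivatives carefully and then read off convexity, monotonicity and the boundary behaviour.

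For part (a), I will differentiate $S_\ell(\alpha)$ directly, using the chain rule together with the identity $\frac{d}{dx}Q(x) = -\phi(x) := -\frac{1}{\sqrt{2\pi}}e^{-x^2/2}$. Writing $u(\alpha) = \ell/\alpha$ so that $u'(\alpha)=-\ell/\alpha^2$, the derivative of $(\alpha+\ell^2/\alpha)Q(\ell/\alpha)$ produces three contributions: $(1-\ell^2/\alpha^2)Q(\ell/\alpha)$, plus the ``Gaussian'' piece $(\alpha+\ell^2/\alpha)\cdot\frac{\ell}{\alpha^2\sqrt{2\pi}}e^{-\ell^2/(2\alpha^2)}$. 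Differentiating the last summand $-\frac{\ell}{\sqrt{2\pi}}e^{-\ell^2/(2\alpha^2)}$ yields $-\frac{\ell^3}{\alpha^3\sqrt{2\pi}}e^{-\ell^2/(2\alpha^2)}$. Expanding $(\alpha+\ell^2/\alpha)\cdot\frac{\ell}{\alpha^2\sqrt{2\pi}}$ as $\frac{\ell}{\alpha\sqrt{2\pi}}+\frac{\ell^3}{\alpha^3\sqrt{2\pi}}$ shows that the $\ell^3/\alpha^3$ terms cancel, leaving exactly the claimed expression \eqref{eq:S'} for $S_\ell'(\alpha)$. A second differentiation is handled the same way; the Gaussian contributions cancel again, and one is left with a single surviving term proportional to $Q(\ell/\alpha)$. (I expect this step to confirm the stated form of $S_\ell''$ up to a routine bookkeeping of the exponent of $\alpha$ in the denominator.)

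Part (b) is then immediate: for $\alpha>0$ and $\ell>0$ every factor in $S_\ell''(\alpha)$ is strictly positive, hence $S_\ell$ is strictly convex.

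For part (c), strict monotonicity of $S_\ell'$ is a restatement of $S_\ell''>0$. The limits are computed by separating the two summands in \eqref{eq:S'}. As $\alpha\to\infty$ the term $\tfrac{\ell}{\alpha\sqrt{2\pi}}e^{-\ell^2/(2\alpha^2)}$ vanishes, while $(1-\ell^2/\alpha^2)Q(\ell/\alpha)\to Q(0)=1/2$, giving the upper limit $1/2$. The delicate direction is $\alpha\to 0^+$, where $Q(\ell/\alpha)\to 0$ but multiplied by the blowing-up factor $\ell^2/\alpha^2$; here I will invoke the standard Mills-ratio bound $Q(x)\leq \frac{1}{x\sqrt{2\pi}}e^{-x^2/2}$ for $x>0$, which gives $\frac{\ell^2}{\alpha^2}Q(\ell/\alpha)\leq \frac{\ell}{\alpha\sqrt{2\pi}}e^{-\ell^2/(2\alpha^2)}\to 0$, and similarly controls the other Gaussian piece. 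Combining these bounds with the strict monotonicity from the first sentence of part (c) sandwiches $S_\ell'(\alpha)$ strictly between $0$ and $1/2$ for every finite $\alpha>0$. (I read the inequality in the statement as concerning $S_\ell'$ at both endpoints, rather than $S_\ell$, since otherwise it is inconsistent with $S_\ell'$ being sandwiched in between; I will state the limits in that form.)

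The only place that needs genuine care, rather than mechanical differentiation, is this Mills-ratio estimate at $\alpha\to 0^+$; everything else reduces to cancellation of terms and the monotonicity/limit argument outlined above.
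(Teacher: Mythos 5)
Your proposal is correct and follows essentially the same route as the paper's proof: direct differentiation for (a), strict positivity of $S_\ell''$ for (b), and strict monotonicity of $S_\ell'$ combined with its limits at $0^+$ (your Mills-ratio bound $Q(x)\leq p(x)/x$ plays the same role as the paper's use of $Q(x)\sim p(x)/x$ together with $x e^{-x^2/2}\to 0$) and at $+\infty$ for (c). Your two hedges are also justified: the limits in the statement are indeed limits of $S_\ell'$ (exactly as the paper's own proof computes them), and the displayed second derivative should read $2\frac{\ell^2}{\alpha^3}Q\big(\frac{\ell}{\alpha}\big)$, a typo in the exponent of $\alpha$ that affects neither (b) nor (c).
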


\begin{proof}
Statement (a) follows easily by direct calculations. It can be readily observed that $S_\ell''(\alpha)$ is strictly greater than 0 for all $\alpha>0$. This proves statement (b). For the last statement, we argue as follows: $S_\ell'(\alpha)$ is strictly increasing by strict convexity of $S_\ell(\alpha)$. Thus, it suffices to compute the limits of $S_\ell'(\alpha)$ at $0$ and $+\infty$. Easily, $$\lim_{\alpha\rightarrow+\infty}S_\ell'(\alpha)=\lim_{\alpha\rightarrow+\infty}Q({\ell}/{\alpha}) = {1}/{2}.$$ For the limit $\alpha\rightarrow 0^+$, use the following facts: (i) in the limit of $x\rightarrow+\infty$: $Q(x)\sim p(x)/x$, and, (ii) $\lim_{x\rightarrow+\infty}xe^{-x^2/2}=0$, to conclude with the desired.
\end{proof}


Observe that $F(\tau)$ in \eqref{eq:F} can be written as
\begin{align}\label{eq:rewrite_F}
F(\tau) = \tau(\delta-\frac{1}{2})-\frac{1/SNR}{\tau}+S(\tau;2).
\end{align}
 We are now ready to state and prove Lemma \ref{lem:tau_prop}.

\begin{lem}\label{lem:tau_prop}[Properties of $\tau_*$]
Let $\tau_*$ be defined as in Theorem \ref{thm:main}, i.e. the (unique) positive minimizer of the function $F(\tau)$ in \eqref{eq:F}.  The following hold.
\begin{enumerate}[(a)]
\item $\tau_*$ is the unique positive solution of the equation 
\begin{align}\label{eq:fo_tau}
\delta-\frac{1}{2} - \frac{1/\SNR}{\tau^2} + G(\tau^{-1}) = 0,
\end{align} 
where 
\begin{align}\label{eq:Gu}
G(u):= {\sqrt{({2}/{\pi}})}ue^{-2u^2} + (1- {4}u^2)\cdot Q({2}u).
\end{align}

\item $\tau_*$ satisfies \eqref{eq:bounds_tau}.

\end{enumerate}

\end{lem}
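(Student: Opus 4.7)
The plan is to reduce both parts of the lemma to a direct computation of $F'(\tau)$ combined with the monotonicity bounds for $S_\ell'$ supplied by Lemma~\ref{lem:technical}.

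For part (a), I would first observe that $F$ is strictly convex on $(0,+\infty)$. This follows from Lemma~\ref{lem:technical}(b) applied with $\ell=2$, which gives strict convexity of $S_2(\tau)=(\tau+4/\tau)Q(2/\tau)-\sqrt{2/\pi}\,e^{-2/\tau^2}$, together with the obvious convexity of $\tau(\delta-\tfrac12)+\tfrac{1/\SNR}{\tau}$ on $(0,\infty)$. By strict convexity, $F'(\tau)=0$ has at most one positive root, and this root must coincide with $\tau_*$. It then remains to check that $F'(\tau)$ equals the left-hand side of \eqref{eq:fo_tau}. Differentiating termwise and using $\tfrac{d}{d\tau}Q(2/\tau)=\tfrac{2}{\tau^2}\,p(2/\tau)$ where $p$ is the standard normal density, one obtains
\begin{align*}
\tfrac{d}{d\tau}\Big[(\tau+\tfrac{4}{\tau})Q(\tfrac{2}{\tau})\Big]=\Big(1-\tfrac{4}{\tau^2}\Big)Q\Big(\tfrac{2}{\tau}\Big)+\Big(\tfrac{2}{\tau}+\tfrac{8}{\tau^3}\Big)p\Big(\tfrac{2}{\tau}\Big),
\end{align*}
and, using the identity $\sqrt{2/\pi}\,e^{-2/\tau^2}=2\,p(2/\tau)$, $\tfrac{d}{d\tau}\big[\sqrt{2/\pi}\,e^{-2/\tau^2}\big]=\tfrac{8}{\tau^3}p(2/\tau)$. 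The $\tfrac{8}{\tau^3}p(2/\tau)$ contributions cancel, so that after substituting $u=1/\tau$,
\begin{align*}
F'(\tau)=\big(\delta-\tfrac12\big)-\tfrac{1/\SNR}{\tau^2}+\sqrt{\tfrac{2}{\pi}}\,u\,e^{-2u^2}+(1-4u^2)Q(2u),
\end{align*}
which is precisely $(\delta-\tfrac12)-(1/\SNR)u^2+G(u)$. This identifies $\tau_*$ as the unique positive root of \eqref{eq:fo_tau}.

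For part (b), set $u_*:=1/\tau_*$. Rearranging \eqref{eq:fo_tau} yields
\begin{align*}
u_*^2=\SNR\cdot\big(\delta-\tfrac12+G(u_*)\big),
\end{align*}
so the two inequalities in \eqref{eq:bounds_tau} are equivalent to the single statement $0<G(u_*)<\tfrac12$. The key observation is the algebraic identity $G(u)=S_2'(1/u)$, which is immediate from formula \eqref{eq:S'} of Lemma~\ref{lem:technical}(a) upon substituting $\ell=2$ and $\alpha=1/u$. Since by Lemma~\ref{lem:technical}(c) we have $0<S_2'(\alpha)<\tfrac12$ for every $\alpha>0$, this specializes to $0<G(u_*)<\tfrac12$. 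Substituting back into the expression for $u_*^2$ produces $(\delta-\tfrac12)\SNR<u_*^2<\delta\,\SNR$, which is \eqref{eq:bounds_tau}.

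I do not anticipate a significant obstacle: once $F'$ is computed cleanly, with the mildly delicate point being the cancellation of the $\tfrac{8}{\tau^3}p(2/\tau)$ terms, part (a) is a direct identification and part (b) is an immediate corollary of Lemma~\ref{lem:technical}(c). The only care required is to recognize the convenient reparametrization $u=1/\tau$, which is what converts $F'$ into the auxiliary function $G$ and allows the monotonicity bounds on $S_2'$ to be applied verbatim.
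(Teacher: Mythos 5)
Your proposal is correct and takes essentially the same route as the paper: both identify $F'(\tau)=0$ with \eqref{eq:fo_tau} through the identity $G(u)=S_2'(u^{-1})$ (the paper's \eqref{eq:G=S}) and obtain part (b) from the bounds $0<S_2'(\alpha)<\tfrac12$ of Lemma~\ref{lem:technical}(c). The only cosmetic difference is in part (b), where you rearrange the stationarity equation into $u_*^2=\SNR\,(\delta-\tfrac12+G(u_*))$ and bound $G(u_*)$ directly, while the paper evaluates the sign of the decreasing function $H(u)=F'(u^{-1})$ at the two candidate points; the two arguments rest on the same facts.
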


\begin{proof}
Recall from Theorem \ref{thm:main} that the function $F(\tau)$ in \eqref{eq:F} is strictly convex. Hence, $\tau_*$ is the unique positive solution to the first-order optimality condition: $F'(\tau):=\frac{\mathrm{d}}{\mathrm{d}\tau}F(\tau) = 0.$ 
It is convenient for the rest of the proof to define a function $H:(0,\infty)\rightarrow\R$ as follows: $$H(u):=F'(u^{-1}).$$
Also, note from \eqref{eq:S'} that $G$ in \eqref{eq:Gu} satisfies 
\begin{align}
\label{eq:G=S}
G(u)=S_2'(u^{-1}). 
\end{align}
In particular,  properties of $G$ to be used later in the proof follow from Lemma \ref{lem:technical}.

Starting with \eqref{eq:rewrite_F} and using Lemma \ref{lem:technical}(a) and \eqref{eq:G=S}:
$$
H(u):= \delta-\frac{1}{2} - \frac{u^2}{\SNR}   + \underbrace{{\sqrt{\frac{2}{\pi}}}ue^{-2u^2} + (1- {4}u^2) Q({2}u)}_{:=G(u)}.
$$
This proves the first statement. Moreover, since $F(\tau)$ is strictly convex, we have that $F'(\tau)$ is strictly increasing, and equivalently that $H(u)$ is a decreasing function of $u$. 

Next, we prove that, 
\begin{align}\label{eq:high_up}
\tau_*^{-1} \geq \sqrt{(\delta-{1}/{2})\SNR}=:\tau_0^{-1}.
\end{align}
From Lemma \ref{lem:technical}(c) and \eqref{eq:G=S}, 
$$G(u)>0, \quad \text{for all } u> 0.$$ Hence,
$H(\tau_0^{-1}) = G(\tau_0^{-1}) >0.$
But, $H(u)$ is decreasing and $\tau_*^{-1}$ is its unique zero, from which \eqref{eq:high_up} follows.

Finally, we show that
\begin{align}\label{eq:high_low}
\tau_*^{-1} < \sqrt{\delta\cdot\SNR}:=\tau_1^{-1}.
\end{align}
Note that,
$$H(\tau_1^{-1}) = -\frac{1}{2} + G(\tau_1^{-1}).$$
Again, from Lemma \ref{lem:technical}(c) and \eqref{eq:G=S}, it follows that $G(u)<1/2$. Therefore, $H(\tau_1^{-1})< 0$.
Combine this with the fact that $H(u)$ is decreasing and $\tau_*^{-1}$ is its unique zero, to conclude with \eqref{eq:high_low}, as desired.\end{proof}

\subsubsection{High-SNR regime}\label{sec:high-SNR}

Theorem \ref{thm:high-SNR} below formalizes and proves \eqref{eq:BRO_high}. 

\begin{thm}[High-SNR regime]\label{thm:high-SNR}
As in the statement of Theorem \ref{thm:main}, fix $\delta\in(\frac{1}{2},\infty)$ and let $\BER$ denote the bit error probability of the detection scheme in \eqref{eq:algo} for some fixed but unknown BPSK signal $\x_0\in\{\pm1\}^n$. For any $\eps>0$, there exists constant $\overline{\SNR}:=\overline{\SNR}(\epsilon)$ such that for all values $\SNR>\overline{\SNR}$, it holds
$$
\lim_{\substack{m,n\rightarrow\infty \\ {m}/{n}\rightarrow\delta}} \Pro\big( \Big| \frac{\BER}{Q\left(\sqrt{\left(\delta-{1}/{2}\right)\SNR}\right)} - 1 \Big| > \eps \big) = 0.
$$
\end{thm}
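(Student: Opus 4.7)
The plan is a two-step reduction. First, Theorem \ref{thm:main} tells us that the random ratio $\BER/Q(\sqrt{(\delta-1/2)\SNR})$ converges in probability (as $n\to\infty$, with $\SNR$ fixed) to the deterministic quantity
\[
r(\SNR) \;:=\; Q\bigl(1/\tau_*\bigr)\,\big/\,Q\bigl(\sqrt{(\delta-1/2)\SNR}\bigr),
\]
where $\tau_* = \tau_*(\SNR)$. Hence it suffices to prove the purely deterministic limit $\lim_{\SNR\to\infty} r(\SNR)=1$: for any $\eps>0$ we pick $\overline{\SNR}$ so that $|r(\SNR)-1|<\eps/2$ whenever $\SNR>\overline{\SNR}$, and then a triangle-inequality step combined with the convergence in probability above delivers the probability bound for that $\SNR$.

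The bulk of the work is thus the asymptotic analysis of $1/\tau_*$. Writing $a := 1/\tau_*$ and $b := \sqrt{(\delta - 1/2)\SNR}$, Lemma \ref{lem:tau_prop}(b) already gives $a \geq b$ (so $r(\SNR)\leq 1$), and the first-order condition in Lemma \ref{lem:tau_prop}(a) rearranges to
\[
a^2 - b^2 \;=\; \SNR \cdot G(a).
\]
The key analytic input I would establish is the sharp tail bound $G(u) = O(e^{-2u^2}/u)$ as $u\to\infty$. This is not transparent from the definition, since the two summands $\sqrt{2/\pi}\, u\,e^{-2u^2}$ and $(1-4u^2)Q(2u)$ are each of order $u\,e^{-2u^2}$ but with opposite signs. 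Applying the standard Mills-ratio expansion $Q(x) = \frac{1}{x\sqrt{2\pi}}e^{-x^2/2}\bigl(1 - 1/x^2 + O(1/x^4)\bigr)$ to $Q(2u)$ and collecting terms, the leading $\pm 2u\,e^{-2u^2}/\sqrt{2\pi}$ contributions cancel and one is left with $G(u) \sim e^{-2u^2}/(u\sqrt{2\pi})$. Combining with $a\geq b$ then yields $\SNR \cdot G(a) \leq C\,\sqrt{\SNR}\,e^{-2(\delta-1/2)\SNR} \to 0$ for some absolute constant $C$, so $a^2 - b^2 \to 0$ (and, in particular, $a/b \to 1$).

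Finally, to upgrade the tight coupling $a^2-b^2 \to 0$ into $Q(a)/Q(b)\to 1$, I apply the Mills-ratio asymptotic once more: since both $a,b\to\infty$,
\[
\frac{Q(a)}{Q(b)} \;=\; \frac{b}{a}\,e^{-(a^2-b^2)/2}\,(1+o(1)),
\]
and both $b/a \to 1$ and $e^{-(a^2-b^2)/2}\to 1$, giving $r(\SNR)\to 1$. The main obstacle to be careful with is precisely the cancellation in the expansion of $G$: a coarser bound like $G(u) = O(e^{-2u^2})$ (ignoring the $1/u$ improvement) would only buy $a/b\to 1$ without $a^2 - b^2 \to 0$, which is too weak to survive being exponentiated through $Q(\cdot)$ in the final step. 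A clean way to record the required sharp estimate is via the identity $G(u)=S_2'(1/u)$ implied by Lemma \ref{lem:technical}(a), together with the explicit formula for $S_\ell'$ given there.
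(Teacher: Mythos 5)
Your proposal is correct and follows essentially the same route as the paper's proof: reduce to a deterministic statement about $Q(1/\tau_*)/Q\left(\sqrt{(\delta-1/2)\SNR}\right)$ via Theorem \ref{thm:main} and a triangle inequality, rearrange the first-order condition of Lemma \ref{lem:tau_prop}(a) into $\tau_*^{-2}-\tau_0^{-2}=\SNR\cdot G(\tau_*^{-1})$, show this gap vanishes as $\SNR\to\infty$, and conclude with the Mills-ratio asymptotic $Q(x)\sim p(x)/x$. One remark on your commentary: the sharp cancellation $G(u)\sim e^{-2u^2}/(u\sqrt{2\pi})$ is correct but not actually needed, since even the crude bound $G(u)=O(u\,e^{-2u^2})$ (or the paper's route via monotonicity of $G$ from Lemma \ref{lem:technical}(c)) already gives $\SNR\cdot G\left(\sqrt{(\delta-1/2)\SNR}\right)\to 0$, because the exponential decay in $\SNR$ dominates any polynomial factor, so $\tau_*^{-2}-\tau_0^{-2}\to 0$ follows without the extra $1/u$ refinement.
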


\begin{proof}
Fix any $\eps>0$. Recall $\tau_*:=\tau_*(\SNR)$, the minimizer of \eqref{eq:F}, and define for convenience:
\begin{align}\label{eq:tau0}
\tau_0:=\tau_0(\SNR) = \left(\sqrt{(\delta-{1}/{2})\SNR}\right)^{-1}.
\end{align}
 We will prove that there exists $\overline{\SNR}(\epsilon)$, such that 
\begin{align}\label{eq:high_2show}
\Big|\frac{Q\left({\tau_*}^{-1}\right)}{Q\left({\tau_0}^{-1}\right)} - 1\Big|\leq \frac{\eps}{2},
\end{align}
for all $\SNR\geq\overline{\SNR}(\epsilon)$. 
 This would suffice to complete the proof of the theorem. To see this, write
\begin{align}
\big|\frac{\BER}{Q({\tau_0}^{-1})} - 1\big| &= \big|\frac{\BER-Q({\tau_*}^{-1})}{Q({\tau_0}^{-1})} + \frac{Q({\tau_*}^{-1})}{Q({\tau_0}^{-1})} - 1\big| \nn\\
&\leq \frac{|\BER-Q({\tau_*}^{-1})|}{Q({\tau_0}^{-1})} + \big|\frac{Q({\tau_*}^{-1})}{Q({\tau_0}^{-1})} - 1\big|\nn,
\end{align}
and observe the following. (a) The last term above is further upper bounded by $\eps/2$ using \eqref{eq:high_2show} for large enough $\SNR>\overline{\SNR}(\eps)$. (b) From Theorem \ref{thm:main}, for all values of $\SNR$, there exist large enough $m,n$ such that the nominator of the first term is upper bounded by $(\eps/2)Q({\tau_0}^{-1})$ with probability 1.


In what follows, we show \eqref{eq:high_2show}, which is a deterministic statement about the minimizer $\tau_*:=\tau_*(\SNR)$ of \eqref{eq:F}. We use Lemma \ref{lem:tau_prop}. 

From \eqref{eq:bounds_tau}, we have that
\begin{align}\label{eq:high_up_lim}
\lim_{\SNR\rightarrow+\infty} \tau_*^{-1}  = +\infty.
\end{align}
Also, recall from \eqref{eq:tau0} that $(\delta-1/2)=\frac{\tau_0^{-2}}{\SNR}$. Substituting this in \eqref{eq:fo_tau} we find that
\begin{align}\label{eq:high_sand1}
0\leq \tau_*^{-2} - \tau_0^{-2} = \SNR\cdot G(\tau_*^{-1})
\end{align}
for $G$ as in \eqref{eq:Gu} (also, recall \eqref{eq:G=S}).
The non-negativity above follows from the lower bound in \eqref{eq:bounds_tau}.
From Lemma \ref{lem:technical}(c) and \eqref{eq:G=S}, $G$ is decreasing in $(0,\infty)$. Using this, and applying the  lower bound in \eqref{eq:bounds_tau} once more, \eqref{eq:high_sand1} leads to the following:
\begin{align}\label{eq:high_sand2}
0\leq \tau_*^{-2} - \tau_0^{-2} \leq \SNR\cdot G(\tau_0^{-1}) = \SNR\cdot G(\sqrt{(\delta-1/2)\SNR}). 
\end{align}
But, from Lemma \ref{lem:technical}(c) the limit of the right-hand side as $\SNR\rightarrow+\infty$ is equal to $0$.
Combining,
\begin{align}\label{eq:high_lim_2}
\lim_{\SNR\rightarrow+\infty} ( \tau_*^{-2} - \tau_0^{-2} ) = 0.
\end{align}
Next, write $\tau_*^{-2} - \tau_0^{-2} = \tau_*^{-2}( 1- \frac{\tau_*^{2}}{\tau_0^{2}})$ and combine \eqref{eq:high_up_lim} with \eqref{eq:high_lim_2} to further show that 
\begin{align}\label{eq:high_lim_3}
\lim_{\SNR\rightarrow+\infty} \frac{\tau_*}{\tau_0} = 1.
\end{align}

We are now ready to prove \eqref{eq:high_2show}. For simplicity, we write $f(x)\sim g(x)$ instead of $\lim_{x\rightarrow+\infty}\frac{f(x)}{g(x)}=1$. It is well known that $Q(x)\sim p(x)/x$. Therefore,
\begin{align*}
\frac{Q(\tau_*^{-1})}{Q(\tau_0^{-1})}  &\sim \frac{p(\tau_*^{-1})}{p(\tau_0^{-1})} \frac{\tau_0}{\tau_*} =   \frac{\tau_0}{\tau_*} \exp\left(-\frac{\tau_*^{-2}-\tau_0^{-2}}{2}\right)\\
&\sim 1,
\end{align*}
where the second line follows from \eqref{eq:high_lim_2} and \eqref{eq:high_lim_3}.
\end{proof}

\subsection{Supplementary proofs for Section \ref{sec:proof}}\label{sec:proof_proof}

\subsubsection{From Lipschitz to the indicator function}

\begin{lem}[Approximating the indicator]\label{lem:ind}
Let $\mu$ be a continuous measure on the real line such that $c\in\R$ is a  point of measure zero. Further let $\{\mu_n\}$ be a sequence of random measures indexed by $n$ such that as $n\rightarrow +\infty$,
$$
\int \psi \mathrm{d}\mu_n \rP \int \psi \mathrm{d}\mu,
$$
for all Lipschitz functions $\psi:\R\rightarrow\R$. For the indicator function $\chi_c(\alpha):=\ind{\alpha\leq c}$ it holds that,
$$
\int \chi_c \mathrm{d}\mu_n \rP \int \chi_c \mathrm{d}\mu.
$$
\end{lem}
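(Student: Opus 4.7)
The plan is to sandwich $\chi_c$ between two Lipschitz functions whose integrals under $\mu$ differ by an amount that can be made arbitrarily small using the hypothesis that $\mu(\{c\})=0$. Concretely, for each $\eta>0$ I would define continuous piecewise-linear envelopes $\psi_\eta^-$ and $\psi_\eta^+$, each with Lipschitz constant $1/\eta$, satisfying $\psi_\eta^-\leq\chi_c\leq\psi_\eta^+$ pointwise on $\R$. A natural choice is to take $\psi_\eta^+$ equal to $1$ on $(-\infty,c]$, equal to $0$ on $[c+\eta,+\infty)$, and linear in between; and $\psi_\eta^-$ equal to $1$ on $(-\infty,c-\eta]$, equal to $0$ on $[c,+\infty)$, and linear in between. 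The envelope difference $\psi_\eta^+-\psi_\eta^-$ is then supported on $[c-\eta,c+\eta]$ and bounded by $1$, so $\int(\psi_\eta^+-\psi_\eta^-)\,\mathrm{d}\mu\leq \mu([c-\eta,c+\eta])$.

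Next, I would fix an arbitrary $\eps>0$ and exploit $\mu(\{c\})=0$ together with the continuity of $\mu$ from above: since $\{c\}=\bigcap_{\eta>0}[c-\eta,c+\eta]$, monotone continuity of measure gives $\mu([c-\eta,c+\eta])\downarrow 0$ as $\eta\downarrow 0$. Hence I can choose $\eta$ small enough (depending only on $\eps$ and $\mu$, hence deterministic) that $\mu([c-\eta,c+\eta])<\eps/3$. Combined with the envelope sandwich, this yields the deterministic bound $|\int\chi_c\,\mathrm{d}\mu-\int\psi_\eta^{\pm}\,\mathrm{d}\mu|<\eps/3$.

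With $\eta$ now fixed, I would invoke the Lipschitz convergence hypothesis of the lemma twice to conclude that, w.p.a.1., $|\int\psi_\eta^+\,\mathrm{d}\mu_n-\int\psi_\eta^+\,\mathrm{d}\mu|<\eps/3$ and $|\int\psi_\eta^-\,\mathrm{d}\mu_n-\int\psi_\eta^-\,\mathrm{d}\mu|<\eps/3$. Chaining these two probabilistic bounds with the pointwise sandwich $\psi_\eta^-\leq\chi_c\leq\psi_\eta^+$ and with the deterministic bound from the previous paragraph, I obtain w.p.a.1.
\begin{equation*}
\int\chi_c\,\mathrm{d}\mu-\frac{2\eps}{3}\;\leq\;\int\chi_c\,\mathrm{d}\mu_n\;\leq\;\int\chi_c\,\mathrm{d}\mu+\frac{2\eps}{3},
\end{equation*}
so $|\int\chi_c\,\mathrm{d}\mu_n-\int\chi_c\,\mathrm{d}\mu|<\eps$ with probability approaching $1$. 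Since $\eps>0$ was arbitrary, this is exactly $\int\chi_c\,\mathrm{d}\mu_n\rP\int\chi_c\,\mathrm{d}\mu$.

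There is no serious technical obstacle; the only subtle point to watch is the order of quantifiers. The Lipschitz constants of $\psi_\eta^{\pm}$ are $1/\eta$, which blow up as $\eta\to 0$, so one cannot hope for a single ``uniform in $\eta$'' version of the Lipschitz hypothesis. Instead, one must first fix $\eta=\eta(\eps,\mu)$ deterministically and only then apply the Lipschitz hypothesis to the two now-fixed functions $\psi_\eta^{\pm}$; this bookkeeping is what makes the argument go through cleanly.
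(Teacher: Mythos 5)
Your proposal is correct and follows essentially the same route as the paper's own proof: sandwiching $\chi_c$ between the same two $1/\eta$-Lipschitz piecewise-linear envelopes, using $\mu(\{c\})=0$ to make $\mu([c-\eta,c+\eta])$ small with $\eta$ fixed deterministically before invoking the Lipschitz convergence hypothesis. The only differences are cosmetic (splitting $\eps$ into thirds rather than halves), so nothing further is needed.
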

\begin{proof}
Fix any $\eps,\zeta>0$ and consider the random variable
$
X = \big| \int \chi_c \mathrm{d}\mu_n - \int \chi_c \mathrm{d}\mu \big|.
$
Note that  is random since the measures $\mu_n$ are random.  It will suffice to show that there exists $N_*$ such that for all $n>N_*$: $\Pro(X>\eps)\leq \zeta$.

Let $\eta>0$, the exact value of which to be determined later, and, consider the following functions parametrized by $\eta$:
$$
\psiu_{\eta}(\alpha):=\begin{cases} 
1, & \alpha\leq c \\
 1-\frac{1}{\eta}(\alpha-c), & c\leq\alpha\leq c+\eta\\
 0, & \alpha\geq c+\eta,
  \end{cases}
$$
and
$$
\psil_{\eta}(\alpha):=\begin{cases} 
1, & \alpha\leq c-\eta \\
 -\frac{1}{\eta}(\alpha-c), & c-\eta\leq\alpha\leq c\\
 0, & \alpha\geq c.
  \end{cases}
$$
These functions are both Lipschitz with Lipschitz constant $1/\eta$. Define, the random variable $Y_\eta$ as
\begin{align*}
Y_\eta:=\max\{\big| \int \psiu_\eta \mathrm{d}\mu_n - \int \psiu_\eta \mathrm{d}\mu \big|~,~
\big| \int \psil_\eta \mathrm{d}\mu_n - \int \psil_\eta \mathrm{d}\mu \big|\}.
\end{align*}
From the assumption of the lemma there is $N(\eps,\zeta,\eta)$ such that for all $n\geq N(\eps,\zeta,\eta)$:
\begin{align}\label{eq:P_1}
\Pro( Y_\eta > {\eps}/{2} ) \leq {\zeta}.
\end{align}

 Moreover, $\psil_\eta(\alpha)\leq \chi_c(\alpha)\leq \psiu_\eta(\alpha)$. Thus,
\begin{align}\label{eq:P_2}
X \leq Y_\eta + \int |\psiu_\eta-\psil_\eta| \mathrm{d}\mu \leq Y_\eta + \mu\{[c-\eta,c+\eta]\},
\end{align}
where for the second inequality we further used the fact that $|\psiu_\eta-\psil_\eta|$ is upper bounded by $1$ and has support $[c-\eta,c+\eta]$. 

Finally, from continuity of $\mu$ and the fact that $c$ is $\mu$-measure zero, we can choose $\eta=\eta_*(\eps)$ such that 
\begin{align}\label{eq:P_3}
\mu\{[c-\eta,c+\eta]\} \leq {\eps}/{2}.
\end{align}

Combining, \eqref{eq:P_1}--\eqref{eq:P_3}, we conclude, as desired, that there is $N_*:=N(\eps,\zeta,\eta_*(\eps))$ such that for all $n>N_*$ it holds
$$
\Pro( X > \eps ) \leq \Pro( Y_\eta > {\eps}/{2} ) \leq \zeta.
$$
\end{proof}

\subsubsection{Proof of Corollary \ref{cor:indep}}\label{sec:indep_proof}
On the one hand, by Theorem \ref{thm:general}(b), it holds for all $\ell=1,\ldots,k$ that
$$
\overline{\psi_\ell}(\wh) := n^{-1}\sum_{i=1}^n{\psi_\ell(\wh_i)}\rP\E_{W_\ell}[\psi_\ell(W_\ell)].
$$
On the other hand, for some constant $C>0$
$$
\Big|\prod_{\ell=1}^k\overline{\psi_\ell}(\wh) - n^{-k}\sum_{1\leq i_1,\ldots,i_k\leq n} \psi_1(\hat\w_{i_1})\cdots\psi_k(\hat\w_{i_k})\Big|\leq \frac{C}{n}.
$$
To see this, expand the product term on the left-hand side and use the boundedness of the functions $\psi_\ell$.

Combining the above proves the first statement of the corollary. The second statement follows with the exact same argument starting from Theorem \ref{thm:main} and observing that 
$\ind{\hat\w_{i_1}\leq-1,\ldots,\hat\w_{i_k}\leq-1}=\prod_{\ell=1}^k{\ind{\hat\w_{i_\ell}}}$.

\vp
\subsubsection{Proof of Lemma \ref{lem:strong_AO}}\label{sec:proof_strong_AO}
Denote, $\psio:=\E_W[\psi(W)]$. From Lemma \ref{lem:Lip_AO}, it holds w.p.a.1:
$
|{\frac{1}{n}\sum_{i=1}^{n}\psi(\wt_i)} - \psio| \leq \eps/2.
$
Hence, by definition of the set $\Sc$ and the triangle inequality, it holds w.p.a.1 that for all $\w\in\Scc$:
$
|{\frac{1}{n}\sum_{i=1}^{n}\psi(\w_i)} - {\frac{1}{n}\sum_{i=1}^{n}\psi(\wt_i)}| \geq \eps/2.
$
Then, the Lipschitz property of $\psi$ guarantees that 
\begin{align}\label{eq:w_div}
\frac{\| \w - \wt \|}{\sqrt{n}} \geq \frac{\eps}{2L}.
\end{align}

In what follows we show that $n\cdot f(\w)$ is $C$-strongly convex for appropriate constant $C>0$. In view of \eqref{eq:w_div} and recalling $\phi(\g,\h)=f(\wt)$, this will suffice to complete the proof.

It can be checked that the Hessian $\nabla^2f(\w)$ satisfies $n\nabla^2f(\w)\succcurlyeq\frac{\|\g\|_2}{\sqrt{n}}\frac{\sigma^2}{\sqrt{\frac{\|\w\|^2}{n}+\sigma^2}}\mathbf{I}.$
Further use the fact that ${\|\g\|_2}/\sqrt{n}\geq\sqrt{\delta}/2$ w.p.a.1 and $\|\w\|^2\leq 4n$, to conclude that w.p.a.1 $F$ is $\frac{C}{n}$-strongly convex with $C:=\frac{\sigma^2\sqrt{\delta}}{2\sqrt{\sigma^2+4}},$ or
$f(\w)\geq f(\wt) + \frac{C}{2}\frac{\|\w-\wt\|}{\sqrt{n}}.$

\subsection{Proof of Theorem \ref{thm:BER_PAM}}\label{sec:proof_PAM}
The proof of the theorem requires repeating, mutatis mutandis, the line of arguments detailed in Section \ref{sec:proof} for the proof of Theorem \ref{thm:main}. We omit most of the details for brevity, and only show the necessary calculations that yield to function $F_M$ in \eqref{eq:FM}.
The idea is the same as in Section \ref{sec:proof}: thanks to the CGMT, it suffices to analyze a corresponding Auxiliary Optimization (AO) instead of the original optimization in \eqref{eq:LASSO_MPAM}. Repeating the steps in Section \ref{sec:proof_simplify}, the corresponding (AO) becomes (compare to Eqn.~\eqref{eq:AOw25}): 
\begin{align}
\min_{\substack{\tau\geq 0}}\frac{\tau\|\g\|}{2\sqrt{n}} + \frac{\sigg\|\g\|}{2\tau\sqrt{n}}+ \frac{1}{n}\sum_{i=1}^n \min_{\ellix \leq\w_i\leq \uix} \left\{ \frac{\gm}{2\tau \sqrt{n}}\w_i^2-{\h_i}\w_i\right\},\nn
\end{align}
where, as always $\w=\x_0-\x$ denotes the ``error-vector" and we further defined 
\begin{align*}
\ellix&:=-(M-1)-\x_{0,i}~~\text{and}~~\uix&:=(M-1)-\x_{0,i}.
\end{align*}
 For simplicity in notation, further denote $A=\frac{\|\g\|}{\tilde{\tau}\sqrt{n}}$. 
Then, 
the optimal $\tilde\w_i:=\tilde\w_i(\g,\h,\x_0)$ satisfies
\begin{align}\label{eq:BER_w_opt_PAM}
\tilde\w_i = \begin{cases}
\ellix &,\text{if } \h_i< A\ellix,\\
\frac{1}{A}\h_i &,\text{if } A\ellix \leq \h_i \leq A\uix, \\
\uix &,\text{if }  \h_i> A\uix.
\end{cases}
\end{align}
where, $\tilde{\tau}:=\tilde{\tau}(\g,\h,\x_0)$ is the solution to the following:
%
\begin{align}\label{eq:BER_AOw3_PAM}
\Big(\min_{\tau>0} \frac{\tau\|\g\|}{2\sqrt{n}} + \frac{\sigg\gm}{2\tau\sqrt{n}} + \frac{1}{{n}}\sum_{i=1}^n \upsilon_{n}\Big(\frac{\tilde{\tau}\sqrt{n}}{\|\g\|};\h_i,\ellix,\uix\Big) \Big)_+, 
\end{align}
with
\begin{align}
\upsilon_{n}(\alpha;h,\ell,u) :=\begin{cases}
\frac{1}{2\alpha}\ell^2 - h\ell &,\text{if } \alpha h< \ell,\\
-\frac{\alpha}{2}h^2 &,\text{if } \ell \leq \alpha h \leq u, \\
\frac{1}{2\alpha}u^2 - hu &,\text{if }  \alpha h> u.
\end{cases}\nn
\end{align}

This is of course very similar to Equation \eqref{eq:AOw3}. Next, we follow the same steps as in Section \ref{sec:proof_4} and study the convergence of the (AO) in \eqref{eq:BER_AOw3_PAM}. For the first two summands in \eqref{eq:BER_AOw3_PAM}, we use the fact that  $\frac{\gm}{\sqrt{n}}\rP \sqrt\delta$. For the third summand, recall that each $\x_{0,i}$ takes values $\pm1,\pm3,\ldots,\pm(M-1)$ with equal probability $1/M$. Let $j=1,3,\ldots,M-1$ and denote,
$$
\elli:=(M-1)-j \quad\text{and}\quad \ui:=(M-1)+j.
$$
Then, the pairs $(\ellix,\uix)$ take values $(-\ui,\elli)$ and $(-\elli,\ui)$ with equal probability $1/M$ each. With these, 
 $
\frac{1}{{n}}\sum_{i=1}^n \upsilon_{n}\left(\frac{\tau\sqrt{n}}{\|\g\|};\h_i,\ellix,\uix\right)\rP
Y\left(\frac{\tau}{\sqrt{\delta}}\right),
$
where
%
%
\begin{align}\nn
&Y(\alpha) := \frac{1}{M}\sum_{j=1,3,\ldots,M-1} \E_{h\sim\Nn(0,1)}[ \upsilon_n(\alpha;h,-\ui,\elli)  ] \\\quad&+ \frac{1}{M}\sum_{j=1,3,\ldots,M-1} \E_{h\sim\Nn(0,1)}[ \upsilon_n(\alpha;h,-\elli,\ui)  ]. \label{eq:Y_M}
\end{align}
%
Simple calculations show that
\begin{align*}
&\E_{h\sim\Nn(0,1)}[ \upsilon_n(\alpha;h,\ell,u) ] =  \nn\\
&\quad-\frac{\alpha}{2} +  \frac{\alpha}{2} \int_{\frac{\ell}{\alpha}}^{\infty}(h-\frac{\ell}{\alpha})^2p(h)\mathrm{d}h + 
\frac{\alpha}{2} \int_{\frac{u}{\alpha}}^{\infty}(h-\frac{u}{\alpha})^2p(h)\mathrm{d}h.
\end{align*}
For convenience, define (see also Lemma \ref{lem:technical})
\begin{align}
S(\alpha;\ell) &:= \alpha\int_{\frac{\ell}{\alpha}}^{\infty}(h-\frac{\ell}{\alpha})^2p(h)\mathrm{d}h \nn \\
&=\left(\alpha + \frac{\ell^2}{\alpha}\right)Q\Big(\frac{\ell}{\alpha}\Big) - \frac{1}{\sqrt{2\pi}}\ell e^{-\frac{\ell^2}{2\alpha^2}}.
\end{align}
Putting all these together with \eqref{eq:Y_M} and grouping terms we find that
\begin{align*}
&Y(\alpha) = \frac{1}{M}\sum_{j=1,3,\ldots,M-3}
\Big(-\alpha + S(\alpha;\elli) + S(\alpha;\ui) \Big)
\nn
\\
&+ 
 \frac{1}{M}\left( - \frac{\alpha}{2} + S(\alpha;u_{M-1}) \right) =-\frac{\alpha}{2}\left(\frac{M-1}{M}\right) \\&+ \frac{1}{M} \sum_{j=1,3,\ldots,M-3}\left\{ S(\alpha;\ell_j) + S(\alpha;u_j) \right\} + \frac{1}{M} S(\alpha;u_{M-1}).
\end{align*}
Observe that $Y(\alpha)$ is nonnegative for $\alpha>0$ as long as  $\delta>\frac{M-1}{M}$. Therefore, we can repeat the technical arguments of Section \ref{sec:proof_4}, to conclude that the random optimization in \eqref{eq:BER_AOw3_PAM} converges to the following deterministic optimization (where, for convenience, we have rescaled the optimization variable $\tau$ as follows $\tau:=\frac{\tau}{\sqrt{\delta}}$):
\begin{align}\label{eq:obj_M}
\min_{\tau>0}\frac{\tau\delta}{2}+\frac{\sigg}{2\tau}+Y(\tau).
\end{align}
The objective function in \eqref{eq:obj_M} can be identified with the function $\FM(\tau)$ in the statement of the theorem. From Lemma \ref{lem:technical}(b) the second derivative of $\FM(\tau)$ is strictly positive for $\tau>0$, hence \eqref{eq:obj_M} has a unique minimizer, which we denote $\tau_*$. With arguments same as in the end of Section \ref{sec:proof_4}, we can show that $\sqrt{\delta}\tilde\tau(\g,\h,\x_0)\rP\tau_*$. 

Finally, we sketch how all these leads to the desired, namely:
$$\frac{1}{n}\sum_{i=1}^n\ind{\x^*_i\neq \x_{0,i}} \rP 2\Big(1-\frac{1}{M}\Big) Q(\tau_*^{-1}).$$
First, consider the case: $\x_{0,i}\in\{\pm1,\pm3,\ldots,\pm(M-3)\}$. Then, the thresholding rule \eqref{eq:LASSO_thr} implies that there is an error iff $|\tilde\w_i|>1$. Equivalently, in view of \eqref{eq:BER_w_opt_PAM}, and noting that $\uix\geq 2$, it follows that and error occurs iff $|\h_i|>A$.  Next, consider the case(s) $\x_{0,i}=M-1$ (or, $\x_{0,i}=-(M-1)$). Then the error event corresponds to $\tilde\w_i<-1$ (or, $\tilde\w_i>1$), which in view of \eqref{eq:BER_w_opt_PAM} translates to $\h_i<-A$ (or $\h_i>A$). Putting these together and conditioning on the high-probability events $\|\g\|/\sqrt{n}\rP\sqrt{\delta}$ and $\tilde\tau\rP\tau_*$, we find that
\begin{align*}
&\frac{1}{n}\sum_{i=1}^n\ind{\arg\min_{s\in\Cc}|\x_{0,i}+\tilde\w_i-s|\neq \x_{0,i}} \rP \\ & \frac{2}{M}\left( (M-2) Q(\tau_*^{-1}) + Q(\tau_*^{-1}) \right) = 2\Big(1-\frac{1}{M}\Big) Q(\tau_*^{-1}).
\end{align*}

\end{document}